\begin{document}

\title{Mining Rooted Ordered Trees under Subtree Homeomorphism}


\author{Mostafa Haghir Chehreghani  \and \\ Maurice Bruynooghe}
\institute{
Mostafa Haghir Chehreghani \and Maurice Bruynooghe 
\at
	      Department of Computer Science, KU Leuven, 3001 Leuven, Belgium \\
Mostafa.HaghirChehreghani@cs.kuleuven.be, Maurice.Bruynooghe@cs.kuleuven.be 
}

\date{Received: date / Accepted: date}

\maketitle

\begin{abstract}
Mining frequent tree patterns has many applications in different areas such as
XML data, bioinformatics and World Wide Web.
The crucial step in frequent pattern mining is frequency counting, which involves
a matching operator to find occurrences (instances) of a tree pattern in a given collection of trees.
A widely used matching operator for tree-structured data is \textit{subtree homeomorphism},
where an edge in the tree pattern is mapped onto an ancestor-descendant relationship in the given tree.
Tree patterns that are frequent under subtree homeomorphism are usually called \textit{embedded patterns}. 
In this paper, we present an efficient algorithm for subtree homeomorphism with application to frequent pattern mining.
We propose a compact data-structure, called \textbf{occ}, which stores only information about the
rightmost paths of occurrences and hence can encode and represent several occurrences of a tree pattern.
We then define efficient join operations on the \textbf{occ} data-structure,
which help us count occurrences of tree patterns according to occurrences of their proper subtrees.
Based on the proposed subtree homeomorphism method, we develop an effective pattern mining algorithm, called TPMiner.
We evaluate the efficiency of TPMiner on several real-world and synthetic datasets.
Our extensive experiments confirm that TPMiner always outperforms well-known existing algorithms,
and in several cases the improvement with respect to  existing algorithms is significant.

\keywords{XML data \and rooted ordered trees \and frequent tree patterns \and subtree homeomorphism \and embedded subtrees}
\end{abstract}

\section{Introduction}
\label{intro}

Many semi-structured data such as XML documents are represented by rooted ordered trees.
One of the most important problems in the data mining of these tree-structured data 
is frequent tree pattern discovery.
Mining frequent tree patterns is very useful in various domains such as network routing \cite{8},
bioinformatics \cite{36} and user web log data analysis \cite{Ivancsy}.
Furthermore, it is a crucial step in several other data mining and machine learning problems such as clustering and classification \cite{37}.


In general, algorithms proposed for finding frequent tree patterns include two main phases:
1) generating candidate tree patterns, and
2) counting the frequency of every generated tree pattern in
a given collection of trees (called the \emph{database trees} from now on).
The generation step (which involves a refinement operator) is computationally easy.
There are methods, such as \textit{rightmost path extension},
that can generate efficiently all non-redundant rooted ordered trees, i.e., each in $O(1)$ computational time \cite{36} and \cite{2}.
The frequency counting step is computationally expensive.
Empirical comparison of these two phases can be found e.g., in \cite{4},
where Chi et. al. showed that a significant part of the time required for finding frequent patterns
is spent on frequency counting. 
Thereby, the particular method used for frequency counting can significantly affect the efficiency of the tree mining algorithm.

The frequency counting step involves a \textit{matching operator} \cite{12}.
A widely used matching operator between a tree pattern and a database tree is \textit{subtree homeomorphism};
an injective mapping of the vertices such that an edge in the tree pattern is mapped onto an ancestor-descendant relationship in the database tree.
Frequent tree patterns under subtree homeomorphism are called \textit{embedded patterns}.
They have many applications in different areas.
Zaki et. al. \cite{37} presented XRules, a classifier based on frequent embedded tree patterns, and
showed its high performance compared to classifiers such as SVM.
Ivancsy and Vajk used frequent embedded tree patterns for analyzing the navigational behavior of the web users \cite{Ivancsy}.

Two widely used frequency notions are \textit{per-tree frequency},
where only the occurrence of a tree pattern inside a database tree is important;  
and \textit{per-occurrence frequency}, where the number of occurrences is important, too.
While there exist algorithms optimized for the first notion \cite{36}, \cite{24} and \cite{26}, 
this notion is covered also by the algorithms proposed for the second notion.
Per-occurrence frequency counting is computationally more expensive than per-tree frequency counting.
In the current paper, our concern is \textit{per-occurrence frequency}.
An extensive discussion about the applications in which the second notion is preferred can be found e.g., in \cite{23}.
One of the investigated examples is a digital library where
author information are separately stored in database trees in some form, e.g., \textsf{author--book--area--publisher}.
A user may be interested in finding out information about the popular \textsf{publisher}s of every \textsf{area}.
Then, the repetition of items within a database tree becomes important,
hence, per-occurrence frequency is more suitable than per-tree frequency \cite{23}.

Two categories of approaches have been used for counting occurrences of tree patterns under subtree homeomorphism.
The first category includes approaches that use one of the algorithms proposed for subtree homeomorphism between two trees.
HTreeMiner \cite{36} employs such an approach.
The second category includes approaches that store the information representing/encoding the occurrences of tree patterns.
Then, when the tree pattern is extended to a larger one,
its stored information is also extended, in a specific way,
to represent the occurrences of the extended pattern.
These approaches are sometimes called \textit{vertical} approaches.
VTreeMiner \cite{36} and MB3Miner \cite{23} are examples of the methods that use such vertical approaches. 
As studied in \cite{36}, vertical approaches are more efficient than the approaches in the first category.
Many efficient vertical algorithms are based on the numbering scheme proposed by Dietz \cite{Dietz}.
This scheme uses a tree traversal order to determine the ancestor-descendant relationship between pairs of vertices.
It associates each vertex with a pair of numbers, sometimes called \textit{scope}.
For instance, VTreeMiner and TreeMinerD \cite{36} and TwigList \cite{Qin}
use this scheme in different forms, to design efficient methods for counting occurrences of tree patterns.

The problem with these algorithms is that in order to count all occurrences,
they use data-structures that represent whole occurrences.
This renders the algorithms inefficient,
especially when patterns are large and have many occurrences in the database trees.
In the worst case,
the number of occurrences of a tree pattern can be exponential in terms of the size of pattern and database \cite{Chisurvey}.
Therefore, keeping track of all occurrences can significantly reduce the efficiency
of the algorithm, in particular when tree patterns have many occurrences in the database trees.

The main contribution of the current paper is to introduce a novel vertical algorithm for the class of rooted ordered trees.
It uses  a more compact data-structure, called \textbf{occ} (an abbreviation for \textbf{oc}currence \textbf{c}ompressor)
for representing occurrences,
and a more efficient subtree homeomorphism algorithm based on Dietz's numbering scheme \cite{Dietz}.
An \textbf{occ} data-structure stores only information about rightmost paths of
occurrences and hence can represent/encode all occurrences that have
the rightmost path in common.
The number of such occurrences can be exponential,
even though the size of the \textbf{occ} is only $O(d)$, where $d$ is the length of the rightmost path of the tree pattern.
We present efficient join operations on \textbf{occ} that help us to
efficiently calculate the occurrence count of tree patterns from the
occurrence count of their proper subtrees.
Furthermore, we observed that in most of widely used real-world databases,
while many vertices of a database tree have the same label,
no two vertices on the same path are identically labeled.
For this class of database trees, worst case space complexity of our algorithm is linear;
a result comparable to the best existing results for per-tree frequency.
We note that for such databases, worst case space complexity of
the well-known existing algorithms for per-occurrence frequency,
such as VTreeMiner \cite{36} and MB3Miner \cite{23}, is still exponential.
Based on the proposed subtree homeomorphism method, we develop
an efficient pattern mining algorithm, called TPMiner.
To evaluate the efficiency of TPMiner, we perform extensive experiments on both real-world and synthetic datasets.
Our results show that TPMiner always outperforms most efficient existing algorithms such as VTreeMiner \cite{36} and MB3Miner \cite{23}.
Furthermore, there are several cases where the improvement of TPMiner with respect to
existing algorithm is very significant.

In Section~\ref{sec:preliminaries}, preliminaries and definitions related to the tree pattern mining problem are introduced.
In Section~\ref{sec:relatedwork}, a brief overview on related work is given.
In Section~\ref{sec:homeomorphism}, we present the \textbf{occ} data-structure and our subtree homeomorphism algorithm.
In Section~\ref{sec:tpminer}, we introduce the TPMiner algorithm for finding frequent embedded tree patterns from rooted ordered trees.
We empirically evaluate the effectiveness of TPMiner in Section~\ref{sec:experimentalresults}.
Finally, the paper is concluded in Section~\ref{section:conclusion}.

\section{Preliminaries}
\label{sec:preliminaries}

We assume the reader is familiar with the basic concepts in graph theory.
The interested reader can refer to e.g., \cite{Diestel}.
An \textit{undirected (vertex-labeled) graph} $G=(V,E,\lambda)$ consists of a vertex set $V$,
an edge set $E\subseteq \{e\subseteq V : |e|=2 \}$, and a labeling function $\lambda$: $V \to \Sigma$ which assigns
a label from a finite set $\Sigma$ to every vertex in $V$.
$G=(V,E,\lambda)$ is a \textit{directed graph} if $E$ is: $E\subseteq V\times V$, where $\times$ is the \textit{Cartesian product}.
We use the notations $V(G)$, $E(G)$ and $\lambda_G$ to refer to the set of
vertices, the set of edges (or arcs) and the labeling function of $G$, respectively. 
The \textit{size} of $G$ is defined as the number of vertices of $G$.
Two graphs $G_1=(V_1,E_1,\lambda_1)$ and $G_2=(V_2,E_2,\lambda_2)$ (either both are directed or both are undirected) are identical,
written as $G_1 = G_2$, if $V_1=V_2$, $E_1=E_2$, and $\forall v \in V_1: \lambda_1(v)=\lambda_2(v)$.
A \textit{path} from a vertex $v_0$ to a vertex $v_n$ in a directed graph $G=(V,E,\lambda)$
is a sequence of vertices such that $\forall i$, $0 \leq i \leq n-1$, $(v_i,v_{i+1}) \in E(G)$.
The \textit{length} of a path is defined as its number of edges (number of vertices minus 1).
A \textit{cycle} is a path with $v_0 = v_n$.

\begin{figure*}
\centering
\includegraphics[scale=0.85]{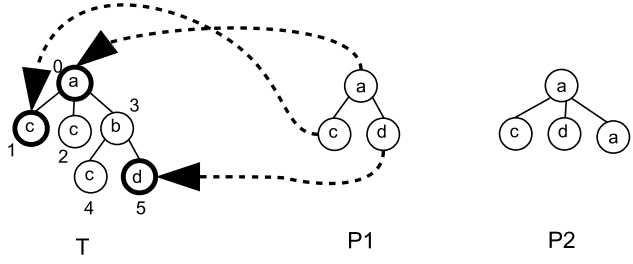}
\caption{
$T$ is a database tree where numbers next to vertices present preorder numbers;
$P1$ is a tree pattern where the vertices labeled by $a$ and $d$ form its rightmost path;
$P2$ is a rightmost path extension of $P1$
where a new vertex labeled by $a$ is added to an existing vertex of $P1$ labeled by $a$ as the rightmost child.
Finally, dashed lines present a subtree homeomorphism mapping from $P1$ to $T$.}
\label{fig:basicexample} 
\end{figure*}

An undirected graph not containing any cycles is called a \textit{forest} and a connected forest is called a (\textit{free}) \textit{tree}.
A \textit{rooted tree} is a directed acyclic graph (DAG) in which:
1) there is a distinguished vertex, called \textit{root},
that has no incoming edges, 2) every other vertex has exactly one incoming edge,
and 3) there is an unique path from the root to any other vertex.
In a rooted tree $T$, $u$ is the \textit{parent} of $v$ ($v$ is the \textit{child} of $u$) if $(u,v) \in E(T)$.
The transitive closure of the parent-child relation is called the \textit{ancestor-descendant} relation.
A \textit{rooted ordered tree} is a rooted tree such that there is an order
over the children of every vertex.
Throughout this paper, we refer to \textit{rooted ordered trees} simply as \textit{trees}.
\textit{Preorder traversal} of a tree $T$ is defined recursively as follows:
first, visit $root(T)$; and then for every child $c$ of $root(T)$ from left to right,
perform a preorder traversal on the subtree rooted at $c$.
The position of a vertex in the list of visited vertices during a preorder traversal is called its \textit{preorder number}.
We use $p(v)$ to refer to the preorder number of vertex $v$.
For instance, in Figure \ref{fig:basicexample},
numbers next to vertices of tree $T$ present their preorder numbers.
The \textit{rightmost path} of $T$ is the path from $root(T)$ to the last vertex of $T$ visited in the preorder traversal. 
For example, in Figure \ref{fig:basicexample}, the vertices labeled by $a$ and $d$ form the rightmost path of $P1$.
Two distinct vertices $u$ and $v$ are \textit{relatives}
if $u$ is neither an ancestor nor a descendant of $v$.
With $p(u) < p(v)$, $u$ is a left relative of $v$, otherwise, it is a right relative.
For example, in tree $T$ of Figure \ref{fig:basicexample}, vertices $1$ and $4$ are relatives
and vertex $4$ is a right relative of vertex $1$.

A tree $T$ is a \textit{rightmost path extension} of a tree $T'$ iff there exist
vertices $u$ and $v$ such that:
(i) $\{v\} = V(T) \setminus V(T')$,
(ii) $\{(u,v)\} = E(T) \setminus E(T')$, 
(iii) $u$ is on the rightmost path of $T'$, and
(iv) in $T$, $v$ is a right relative of all children of $u$.
We say that $T$ is the rightmost path extension of $T'$ with $v$ attached
at $u$ and we denote $T$ as $RExtend(T',v,u)$.
For instance, in Figure \ref{fig:basicexample}, $P2$ is a rightmost path extension of $P1$,
where the new vertex $v$ is labeled by $a$,
the existing vertex $u$ is also labeled by $a$,
and $v$ is added to $u$ as the rightmost child.

A tree $P$ is \textit{subtree homeomorphic} to a tree $T$ (denoted by $P \preceq_h T$)
iff there is a mapping $\varphi:V(P)\to V(T)$ such that:
(i) $\forall v \in V(P): \lambda_P(v) = \lambda_T(\varphi(v))$,
(ii) $\forall u,v \in V(P)$: $u$ is the parent of $v$ in $P$ iff $\varphi(u) \text{ is an ancestor of $\varphi(v)$ in $T$}$, and
(iii) $\forall u,v \in V(P)$: $p(u)<p(v) \Leftrightarrow p(\varphi(u)) < p(\varphi(v))$.
For example, in Figure \ref{fig:basicexample}, dashed lines present a subtree homeomorphism mapping
from $P1$ to $T$.
Under subtree isomorphism and homomorphism, 
the ancestor-descendant relationship between $\varphi(u)$ and $\varphi(v)$ in (ii) is strengthened into
the parent-child relationship;
under subtree homomorphism, (iii) is weakened into:
$p(u)<p(v) \Leftrightarrow p(\varphi(u)) \leq p(\varphi(v))$, i.e., 
successive children of a vertex in
the pattern can be mapped onto the same vertex in the database tree.
$P$ is \textit{isomorphic} to $T$ (denoted $P \cong_i T$)
iff $P$ is subtree isomorphic to $T$ and $|V(P)| = |V(T)|$.

Note that a pattern $P$ under a subtree morphism can have several mappings
to the same database tree $T$.
When the matching operator is subtree homeomorphism,
every mapping is called an \textit{occurrence} (or \textit{embedding}) of $P$ in $T$. 
An occurrence (embedding) of a vertex $v$
is an occurrence (embedding) of the pattern consisting of the single vertex $v$.
The number of occurrences of $P$ in $T$ is denoted by $NumOcc(P,T)$.

Given a database $\mathcal{D}$ consisting of trees and a tree $P$,
\textit{per-tree support} (or \textit{per-tree frequency}) of $P$ in $\mathcal{D}$ is defined as:
$|\{T \in \mathcal{D} : P \preceq_h T \}|$.
\textit{Per-occurrence support} (or \textit{per-occurrence frequency}) of $P$ in $\mathcal{D}$ is defined as:
$\sum_{T\in \mathcal{D}} NumOcc(P,T)$.
In this paper, our focus is \textit{per-occurrence support}.
For the sake of simplicity, we use the term \textit{support} (or \textit{frequency})
instead of \textit{per-occurrence support} (or \textit{per-occurrence  frequency}), and denote it by $sup(P,\mathcal{D})$.
$P$ is \textit{frequent} ($P$ is a \textit{frequent embedded pattern}),
iff its support is greater than or equal to a user defined integer threshold $minsup > 0$. 
The problem studied in this paper is as follows:
given a database $\mathcal{D}$ consisting of trees and an integer $minsup$,
find every frequent pattern $P$ such that $sup(P,\mathcal{D}) \geq minsup$.

\begin{figure*}
\centering
\includegraphics[scale=0.6]{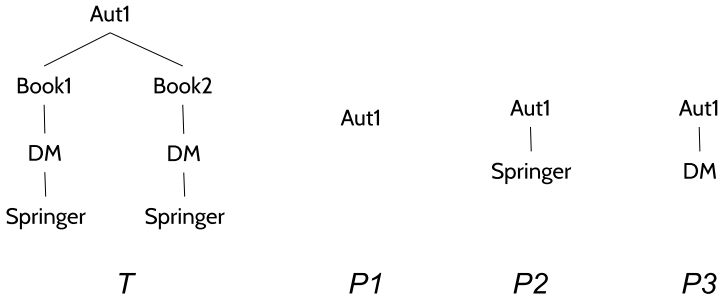}
\caption{
In the database tree $T$ and for $minsup=2$, while $P1$ is infrequent, it has two frequent supertrees $P2$ and $P3$.
}
\label{fig:antimonotone} 
\end{figure*}

We observe that when \textit{per-occurrence support} is used,
anti-monotonicity might be violated: 
it is possible that the support of $P$ is greater than or equal to $minsup$,
but it has a subtree whose support is less than $minsup$.
For example, consider the database tree $T$ of Figure \ref{fig:antimonotone}
and suppose that $minsup$ is $2$.
Then, while the pattern $P1$ is infrequent,
it has two frequent supertrees $P2$ and $P3$.
Therefore, in a more precise (and practical) definition, which is also used by algorithms
such as VTreeMiner \cite{36},
tree $P$ is frequent iff: 1) $sup(P,\mathcal{D}) \geq minsup$,
and 2) the subtree $P'$ generated by removing the rightmost vertex of $P$ is frequent.
This means only frequent trees are extended to generate larger patterns.


\section{Related work}
\label{sec:relatedwork}

Recently, many algorithms have been proposed for finding frequent embedded patterns from a database of tree-structured data
that work with both \textit{per-tree} and \textit{per-occurrence} frequencies.
Zaki presented VTreeMiner \cite{36} to find embedded patterns from trees.
For frequency counting he used an efficient data structure, called \textit{scope-list}, and proposed rightmost path extension
to generate non-redundant candidates.
Later, he proposed the SLEUTH algorithm to mine embedded patterns from rooted unordered trees \cite{35}.
Tan et. al. \cite{23} introduced the MB3Miner algorithm,
where they use a unique occurrence list representation of the tree structure,
that enables efficient implementation of their Tree Model Guided (TMG) candidate generation.
TMG can enumerate all the valid candidates that fit in the structural aspects of the database.
Chaoji et. al. \cite{Chaoji} introduced a generic pattern mining approach that supports various types of patterns and frequency notions. 
A drawback of these algorithms is that in order to count the number of occurrences 
of a tree pattern $P$ in a database tree $T$, they need to keep track of all occurrences of $P$ in $T$.
For example, in VTreeMiner, for every occurrence $\varphi$ of $P$ in $T$ a separate element is stored in \textit{scope-list}, 
that consists of the following components:
($i$) $TId$ which is the identifier of the database tree that contains the occurrence,
($ii$) $m$ which is $\{\varphi(v) | v \in V(P) \setminus \{\text{rightmost vertex of $P$}\}\}$, and
($iii$) $s$ which is the scope of $\varphi(u)$, where $u$ is the rightmost vertex of $P$.
In the current paper, we propose a much more compact data-structure that can represent/encode all occurrences that have
the rightmost path in common in $O(d)$ space, where $d$ is the length of the rightmost path of $P$.
The number of such occurrences can be exponential.
We then present efficient algorithms that
calculate the occurrence count of tree patterns from the occurrence count
of their proper subtrees.

There also exist several algorithms optimized for \textit{per-tree frequency}.
Zaki presented TreeMinerD \cite{36} based on the \textit{SV-list} data-structure.
Tatikonda et. al. \cite{24} developed TRIPS and TIDES using two sequential encodings of trees
to systematically generate and evaluate tree patterns.
Wang et. al. \cite{26} proposed XSpanner that uses a pattern growth-based method to find embedded tree patterns.
The literature contains also many algorithms for mining patterns
under subtree isomorphism; examples are in \cite{2}, \cite{18}, \cite{Balcazar}, \cite{Chehreghani1} and \cite{Chehreghani2}.
From the applicability point of view, for instance in the prediction task,
when longer range relationships are relevant, 
subtree homeomorphism becomes more useful than subtree isomorphism.
For a more comprehensive study of related work,
the interested reader can refer to e.g., \cite{Chisurvey}.

A slightly different problem over rooted, ordered, and labeled trees
is the tree inclusion problem: can a pattern $P$ be obtained from a tree
$T$ by deleting vertices from $T$. In our terminology, is $P$ present in $T$
under subtree homeomorphism? Bille and Gortz \cite{Bille} recently proposed a
novel algorithm that runs in linear space and subquadratic time,
improving upon a series of polynomial time algorithms that started
with the work in \cite{12}.

\section{Efficient tree mining under subtree homeomorphism}
\label{sec:homeomorphism}

In this section, we present our method for subtree homeomorphism of rooted ordered trees.
First in Section~\ref{sec:occurrencetrees},
we introduce the notion of \textit{occurrence tree} and its rightmost path extension.
Then in Section~\ref{sec:occ}, we present the \textbf{occ-list} data-structure and, in Section~\ref{sec:join},
the operators for this data structure.
In Section~\ref{sec:completely}, we analyze space and time complexities of our proposed frequency counting method. 
We briefly compare
our approach with other vertical frequency counting methods in Section~\ref{sec:briefcomparison}.

\subsection{Occurrence trees and their extensions}
\label{sec:occurrencetrees}

Under rightmost path extension, a pattern $P$ with $k + 1$ vertices is
generated from a pattern $P'$ with $k$ vertices by adding a vertex $v$, as the
rightmost child, to a vertex in the rightmost path of $P$. Occurrences
of $P$ are rightmost path extensions of occurrences of $P'$ with an
occurrence of $v$. Therefore, an interesting way to construct
occurrences of $P$ is to look at the occurrences of $v$ that can be a
rightmost path extension of an occurrence of $P'$.
First, we introduce the notion of \textit{occurrence tree}.
Then, we present the conditions under which a rightmost path extension of an occurrence tree yields another occurrence tree.

\begin{definition}[Occurrence tree]\label{def:Occurrencetree}
Given an occurrence $\varphi$ of $P$ in $T$, we define the \emph{occurrence tree} $\mathcal{OT}(\varphi)$ as follows:
(i) $V(\mathcal{OT}(\varphi))=\{ \varphi(v): v \in V(P) \}$,
(ii) $root(\mathcal{OT}(\varphi)) = \varphi(root(P))$,
for every $v \in V(\mathcal{OT}(\varphi))$, $\lambda_{\mathcal{OT}(\varphi)} (v)=\lambda_{T} (v)$, and
(iii) $E(\mathcal{OT}(\varphi)) = \{(\varphi(v_1),\varphi(v_2)) | (v_1,v_2) \in E(P)\}$.
\end{definition}

\begin{figure*}
\centering
\includegraphics[scale=0.7]{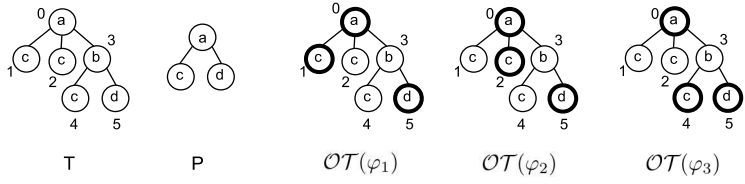}
\caption{From left to right, a database tree $T$, a pattern $P$ and three
occurrence trees $\mathcal{OT}(\varphi_1)$, $\mathcal{OT}(\varphi_2)$, and
$\mathcal{OT}(\varphi_3)$. Labels are inside vertices, preorder numbers are next
to vertices. The occurrence trees are represented by showing the
occurrences of the pattern vertices in bold. Their edges are the images of
the edges in the pattern; for example, $\mathcal{OT}(\varphi_3)$ refers to the tree
formed by vertices $0$, $4$ and $5$, with $0$ as the root, and with $(0,4)$ and
$(0,5)$ as edges.
}
\label{fig:1} 
\end{figure*}

Notice, when $(v_1,v_2) \in E(\mathcal{OT}(\varphi))$ and $v_1$ is not the parent of $v_2$ in $T$,
then all intermediate vertices on the path from $v_1$ to $v_2$ are not part of $V(\mathcal{OT}(\varphi))$.
For example, in Figure~\ref{fig:1}, the tree $P$ has $3$ occurrences in tree $T$. 

Selecting a vertex not yet in the occurrence tree and performing a
rightmost path extension does not always result in another occurrence
tree.
Proposition~\ref{prop:RPE1} below lists properties that hold when the rightmost path extension is an occurrence tree.

\begin{proposition}
\label{prop:RPE1}
Let $\varphi'$ be an occurrence of a pattern $P'$ in a database tree $T$ and $OT' = \mathcal{OT} (\varphi')$. 
Let $x$ be a vertex of $T$ outside $OT'$, and $y$ a vertex on the rightmost path of $OT'$.
If $OT=RExtend(OT',x,y)$ is an occurrence tree in $T$, then:
(i) $root(OT')$ is an ancestor of $x$ in $T$,
(ii) of all ancestors of $x$ in $T$ that belong to $OT'$, $y$ is the largest
one in the preorder over $T$, and
(iii) for each vertex $w$ in $OT'$ such that $p(w) > p(y)$, $w$ is a left relative
of $x$ in $OT$ and in $T$.
\end{proposition}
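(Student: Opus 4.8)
The plan is to isolate two structural invariants of occurrence trees and then read off the three claims, using condition (iv) of the rightmost path extension as the main lever. The first invariant is that ancestorship is preserved in both directions between an occurrence tree and $T$ on the occurrence tree's vertex set: for $a,b \in V(OT')$, $a$ is an ancestor of $b$ in $OT'$ iff $a$ is an ancestor of $b$ in $T$. The forward direction holds because every edge of an occurrence tree is, by Definition~\ref{def:Occurrencetree} together with condition (ii) of subtree homeomorphism, an ancestor--descendant pair in $T$, so a directed path in the occurrence tree yields an ancestor chain in $T$ by transitivity; the reverse direction is condition (ii) itself. The second invariant is that $T$-preorder restricted to $V(OT')$ agrees with the occurrence tree's own preorder, which is condition (iii) of homeomorphism. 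I will use both for $OT'$ and $OT$. Since $OT=RExtend(OT',x,y)$ is an occurrence tree whose only new edge is $(y,x)$, that edge is the image of a parent--child edge of the underlying pattern, so $y$ is an ancestor of $x$ in $T$.

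For (i), I note that a rightmost path extension leaves the root unchanged, so $root(OT')=root(OT)$, and the root of an occurrence tree is an ancestor of every other vertex in it; applying the forward invariant to $root(OT)$ and $x$ gives that $root(OT')$ is an ancestor of $x$ in $T$.

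For (ii) I argue by contradiction. Suppose some $w\in V(OT')$ is an ancestor of $x$ in $T$ with $p(w)>p(y)$. Then $w$ and $y$ both lie on the unique root-to-$x$ path of $T$, hence are comparable, and since preorder increases down a path, $p(w)>p(y)$ forces $y$ to be an ancestor of $w$ in $T$; by the first invariant $y$ is an ancestor of $w$ in $OT'$. Let $c$ be the child of $y$ in $OT'$ through which the $OT'$-path to $w$ passes, so $c$ is an ancestor-or-equal of $w$, hence an ancestor of $x$, in $T$. But $c$ is also a child of $y$ in $OT$, so condition (iv) of $RExtend$ forces $x$ to be a right relative of $c$, contradicting $x$ being a descendant of $c$. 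Hence no such $w$ exists, and since $y$ itself is an ancestor of $x$ in $T$ and belongs to $OT'$, it is the preorder-largest element of $\{w\in V(OT') : w \text{ is an ancestor of } x \text{ in } T\}$.

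For (iii), the first step -- and the main obstacle -- is to show that every $w\in V(OT')$ with $p(w)>p(y)$ is a descendant of $y$. Because $y$ lies on the rightmost path of $OT'$ it has no right relatives there, so in $OT'$-preorder every vertex following $y$ is a descendant of $y$; by the second invariant the same ordering holds for $T$-preorder, so $p(w)>p(y)$ makes $w$ a descendant of $y$ in $OT'$, hence in $T$. Now fix the child $c$ of $y$ in $OT'$ with $c$ an ancestor-or-equal of $w$. Condition (iv) makes $x$ a right relative of $c$, i.e. $p(c)<p(x)$ and $x$ incomparable to $c$ in $T$. Then $x$ cannot be a descendant of $w$ (that would force it below $c$), and $x$ cannot be an ancestor of $w$ (otherwise $x$ and $c$, both descendants of $y$ lying on the $y$-to-$w$ path, would be comparable), so $w$ and $x$ are relatives. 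Finally, the subtree of $c$ in $T$ occupies a contiguous preorder interval starting at $p(c)$; since $x$ lies outside this subtree with $p(x)>p(c)$, it comes after the whole interval, giving $p(w)<p(x)$. Thus $w$ is a left relative of $x$ in $T$, and since ancestorship and preorder agree between $OT$ and $T$, also in $OT$.
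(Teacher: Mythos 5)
Your proof is correct and takes essentially the same route as the paper's: (i) from the root being unchanged and transitivity of ancestorship, (ii) from condition (iv) of $RExtend$ making $x$ a right relative of $y$'s children, and (iii) from the rightmost-path property forcing any $w$ with $p(w)>p(y)$ to be a descendant of $y$. The only difference is one of rigor, not of method: you make explicit the two transfer invariants (ancestorship and preorder agree between an occurrence tree and $T$ on the occurrence tree's vertices) that the paper's much terser proof uses silently, e.g.\ in its parenthetical ``children of $y$ are relatives of $x$ in $T$.''
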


\begin{proof}
By the definition of rightmost path extension presented in Section~\ref{sec:preliminaries},
either $y$ is $root(OT')$ or it is a descendant of $root(OT')$; moreover $y$
is the parent of $x$, hence (i) holds.
As $y$ is on the rightmost path in $OT'$ and $y$ is the parent of $x$ in $OT$ (and children of $y$ are relatives of $x$ in $T$), $y$
is, among the ancestors of $x$ in $T$, the largest one that belongs to $OT'$
(ii).
If $p(w) > p(y)$ for a vertex $w \in V(OT')$, since $y$ is on the rightmost path of $OT'$,
$w$ is a descendant of $y$.
Furthermore, since $x$ is the rightmost child of $y$ in $OT$, $w$ is a left relative of $x$ in $OT$ and in $T$ (iii).
$\blacksquare$
\end{proof}

\begin{figure*}
\centering
\subfigure[]
{
\includegraphics[scale=0.75]{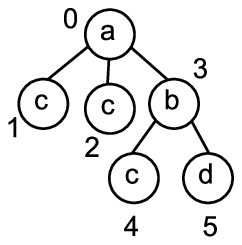}
\label{fig:3a}
}\quad
\subfigure[]
{
\includegraphics[scale=0.75]{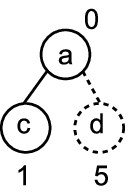}
\label{fig:3b}
}\quad
\subfigure[]
{
\includegraphics[scale=0.75]{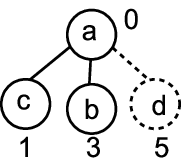}
\label{fig:3c}
}\quad
\subfigure[]
{
\includegraphics[scale=0.75]{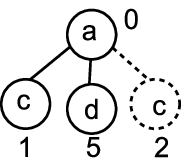}
\label{fig:3d}
}
\subfigure[]
{
\includegraphics[scale=0.75]{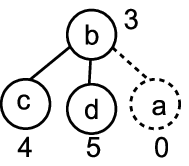}
\label{fig:3e}
}
\caption{\label{fig:3}
A database tree is shown \ref{fig:3a}, together with four rightmost path
extensions of different occurrence trees in that database tree.
However, only \ref{fig:3b} is itself an occurrence tree in the database tree;
\ref{fig:3c}, \ref{fig:3d} and \ref{fig:3e} violate conditions (ii), (iii) and (i) of Proposition \ref{prop:RPE2}, respectively.
}
\end{figure*}

As an example of Proposition \ref{prop:RPE1}, consider Figure \ref{fig:3},
where \ref{fig:3a} presents a database tree
and \ref{fig:3b} shows that 
an occurrence tree $OT'$ consisting of vertices $0$ and $1$
is extended to another occurrence tree $OT$ consisting of vertices $0$, $1$ and $5$.
Vertex $0$ is an ancestor of vertex $5$ in the database tree (condition ($i$));
among all ancestors of vertex $5$ in the database tree that belong to $OT'$,
vertex $0$ is the largest one in the preorder over the database tree (condition ($ii$));
and vertex $1$ is a left relative of vertex $5$ in $OT$ and in the database tree (condition ($iii$)).

Let $P'$ be a tree pattern.
The next proposition lists the conditions that are sufficient to ensure
that a rightmost path extension of an occurrence tree $OT(\varphi')$ of $P'$ with an edge $(y,x)$ yields an
occurrence tree of another tree pattern $P$, where $P$ is a rightmost path extension of $P'$.

\begin{proposition}
\label{prop:RPE2}
Let $\varphi'$ be an occurrence of a pattern $P'$ in a database tree $T$ and $OT' = \mathcal{OT} (\varphi')$. 
Let $x$ be a vertex of $T$ outside $OT'$, and $y$ a vertex on the rightmost path of $OT'$.
If:
(i) $y$ is an ancestor of $x$ in the database tree $T$,
(ii) of all vertices of $OT'$ that are ancestors of $x$ in the database tree $T$, $y$ is the largest one, and
(iii) $p(x) > p(w)$ for all $w \in OT'$, 
then $RExtend(OT',x,y)$ is an occurrence tree.
For a given vertex $x$ and an occurrence tree $OT$, 
if there exists a vertex $y$ such that $RExtend(OT,x,y)$,
$y$ is unique.
\end{proposition}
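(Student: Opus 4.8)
The plan is to prove the first statement by exhibiting the pattern $P$ and the occurrence whose occurrence tree is $RExtend(OT',x,y)$. Since $OT'=\mathcal{OT}(\varphi')$, the map $\varphi'$ is a bijection from $V(P')$ onto $V(OT')$ that preserves both the ancestor--descendant relation and the preorder (conditions (ii) and (iii) defining $\preceq_h$); consequently it carries the rightmost path of $P'$ onto the rightmost path of $OT'$, because a rightmost path consists exactly of the ancestors (including itself) of the last vertex in preorder. As $y$ lies on the rightmost path of $OT'$, its unique preimage $u=\varphi'^{-1}(y)$ lies on the rightmost path of $P'$, so $P:=RExtend(P',v,u)$ --- obtained by attaching a fresh vertex $v$ with $\lambda_P(v)=\lambda_T(x)$ as the rightmost child of $u$ --- is a legal rightmost path extension. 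I would then set $\varphi:=\varphi'\cup\{v\mapsto x\}$ and reduce the whole statement to showing that $\varphi$ is an occurrence of $P$ in $T$, since then $\mathcal{OT}(\varphi)$ is exactly $RExtend(OT',x,y)$ by Definition~\ref{def:Occurrencetree}.

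Verifying that $\varphi$ is an occurrence splits into the defining conditions, and for pairs lying wholly inside $V(P')$ each holds because $\varphi'$ already satisfies it; injectivity is immediate since $x\notin V(OT')$, and the label condition holds by the choice $\lambda_P(v)=\lambda_T(x)$. The only real work concerns the pairs involving the new vertex $v$. I would first record two structural facts: as the newly attached rightmost child of a rightmost-path vertex, $v$ is the last vertex of $P$ in preorder and is a leaf; dually, hypothesis (iii) says $p(x)>p(w)$ for every $w\in OT'$, so $x$ has the largest preorder number in the image of $\varphi$. This makes the preorder condition for pairs $(a,v)$ immediate in both directions, and it also shows $x$ is an ancestor of no vertex of $OT'$ (any ancestor would have a smaller preorder number), matching the fact that $v$ is a leaf.

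The crux is the ancestor--descendant condition for the pairs $(a,v)$ with $a\in V(P')$, namely that $a$ is an ancestor of $v$ in $P$ if and only if $\varphi'(a)$ is an ancestor of $x$ in $T$. The forward direction is routine: an ancestor of $v$ is $u$ or a proper ancestor of $u$, and hypothesis (i) together with the ancestor-preservation of $\varphi'$ and transitivity places $\varphi'(a)$ among the ancestors of $x$. The backward direction is the main obstacle, and it is where hypothesis (ii) enters. The key observation I would exploit is that all ancestors of a fixed vertex $x$ in $T$ lie on the root-to-$x$ path and hence form a chain under the ancestor relation; thus if $\varphi'(a)\in OT'$ is an ancestor of $x$, then $\varphi'(a)$ and $y$ are comparable, and since by (ii) $y$ is the preorder-largest element of $OT'$ that is an ancestor of $x$, the vertex $\varphi'(a)$ must be an ancestor of $y$ or equal to it. Transporting this back through $\varphi'$ gives that $a$ is an ancestor of $u$ or $a=u$, hence an ancestor of $v$ in $P$. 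This completes the verification and establishes that $RExtend(OT',x,y)$ is an occurrence tree.

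For the uniqueness claim I would argue directly from Proposition~\ref{prop:RPE1}. If both $RExtend(OT,x,y_1)$ and $RExtend(OT,x,y_2)$ are occurrence trees, then part (ii) of Proposition~\ref{prop:RPE1} forces each $y_i$ to be the preorder-largest vertex of $OT$ among the ancestors of $x$ in $T$. Because distinct vertices of $T$ carry distinct preorder numbers, this maximal vertex is unique, so $y_1=y_2$. Hence a valid attachment point $y$, when it exists, is determined by $x$ and $OT$ alone.
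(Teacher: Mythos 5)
Your proof is correct and takes essentially the same route as the paper's: both construct the extended pattern $P=RExtend(P',v,u)$ with $u=\varphi'^{-1}(y)$, extend the mapping by $v\mapsto x$, and verify that the resulting $\varphi$ is an occurrence whose occurrence tree is $RExtend(OT',x,y)$, with uniqueness resting on the uniqueness of preorder numbers. Your write-up is in fact more detailed than the paper's rather terse verification --- notably the chain argument (ancestors of $x$ are totally ordered) for the backward ancestor direction, and the appeal to Proposition~\ref{prop:RPE1}(ii) for uniqueness --- but these are elaborations of the same argument, not a different approach.
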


\begin{proof}
Let $u$ be the vertex of $P'$ such that $y=\varphi(u)$. To define $P$, we set $V(P)=
V(P') \cup \{v\}$ with $v$ a new vertex with the same label as $x$, and $E(P)
= E(P') \cup \{(u,v)\}$ such that $v$ is the rightmost child of $u$. Define
$OT$ as $V(OT) = V(OT')\cup \{x\}$ and $E(OT) = E(OT) \cup \{(y,x)\}$ and
set $\varphi(P)= \varphi'(P') \cup \{v \rightarrow x\}$.
By the construction and the assumptions, $OT=RExtend(OT',x,y)$.
We show $OT$ is a occurrence tree of $P$ in $T$.
From (i) and (ii) it follows that $x$ is on the rightmost path from $root(OT)$ and
it is a descendant of $y$ in $T$ 
and from (iii) that the children of $y$ in $OT$ are left relatives of $x$, hence,
$\varphi$ is an embedding of $P$ in $T$ and $OT$ is an occurrence tree of $P$ in $T$.
We note since all vertices of a tree have a unique preorder number, 
$y$ is unique, if it exists.
$\blacksquare$
\end{proof}


Figure~\ref{fig:3} shows a database tree (\ref{fig:3a}) and four rightmost path extensions
of occurrence trees; however, only one of these extensions is
another occurrence tree (\ref{fig:3b}), the other ones violate the conditions of
Proposition~\ref{prop:RPE2}.

To turn the conditions of Proposition~\ref{prop:RPE2} into a practical method, we
need a compact way to store occurrence trees and an efficient way to
check the conditions. For the latter, we take advantage of the solution of Dietz \cite{Dietz}.
He has designed a numbering scheme based on tree traversal order to determine
the ancestor-descendant relationship between any pair of vertices.
This scheme associates each vertex $v$ with a pair of numbers $\left< p(v), p(v)+size(v) \right>$,
where $size(v)$ is an integer with certain properties
(which are met when e.g., $size(v)$ is the number of descendants of $v$).
Then, for two vertices $u$ and $v$ in a given database tree, $u$ is an ancestor of $v$ iff $p(u) < p(v)$ and  $ p(v)+ size(v) \leq p(u) + size(u)$,
and $v$ is a right relative of $u$ iff $p(u)+size(u) < p(v)$.
In several algorithms, such as VTreeMiner and TreeMinerD \cite{36} and TwigList \cite{Qin},
variants of this scheme have been used to design efficient methods for counting occurrences
of tree patterns based on occurrences of their subtrees.

Our contribution is to introduce data structures, based on the Dietz numbering scheme,
that allow us to speed up counting of all occurrences of tree patterns.
For example, while the algorithm of \cite{36} keep track of all occurrences, we only store the occurrences that have distinct rightmost paths.
We start with introducing some additional notations.
The scope of a vertex $x$ in a database tree $T$, denoted $x.scope$, is a pair $(l, u)$,
where $l$ is the preorder number of $x$ in $T$ and $u$ is the
preorder number of the rightmost descendant of $x$ in $T$.
We use the notations $x.scope.l$ and $x.scope.u$ to refer to $l$ and $u$ of the scope of $x$.

\begin{definition}[rdepth]
Let $x$ be a vertex on the rightmost path of a tree $T$.
The $rdepth$ of $x$ in $T$, denoted $ rdep_T(x)$, is
the length of the path from the root of $T$ to $x$.
\end{definition}

A vertex $x$ on the rightmost path of $T$ is uniquely distinguished by $rdep_T(x)$.
For example in Figure~\ref{fig:3a}, the \textit{rdepth} of vertices $0$, $3$ and $5$ is $0$, $1$ and $2$, respectively
(and for the other vertices, \textit{rdepth} is undefined).

\begin{proposition}
\label{proposition:rightmostpathextensiona}
Let $OT'$ be an occurrence tree of a tree pattern $P'$ in a database tree $T$, $x \in V(T) \setminus  V(OT')$
and $y$ a vertex on the rightmost path of $OT'$ but not the rightmost vertex
(i.e., the rightmost path of $OT'$ has a vertex $z$ such that $rdep_{OT'} (z) = rdep_{OT'}(y)+ 1$).
We have: $RExtend( OT',x,y)$ is an occurrence tree iff
\begin{equation}
\label{eq:proposition2:eq0}
z.scope.u < x.scope.l \leq y.scope.u 
\end{equation}
\end{proposition}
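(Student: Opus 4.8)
The plan is to prove the two implications separately, using Proposition~\ref{prop:RPE2} (sufficiency) for the ``if'' direction and Proposition~\ref{prop:RPE1} (necessity) for the ``only if'' direction, after translating the relations ``ancestor'' and ``left relative'' into scope inequalities via the Dietz scheme: for vertices $a,b$ of $T$, $a$ is a proper ancestor of $b$ iff $a.scope.l < b.scope.l \le a.scope.u$, and $a$ is a left relative of $b$ iff $a.scope.u < b.scope.l$. I would first record two structural facts about $OT'$. Since $y$ lies on the rightmost path of $OT'$, it has no right relatives in $OT'$, so every $w \in V(OT')$ with $p(w) > p(y)$ is a proper descendant of $y$. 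Moreover, because $z$ is the child of $y$ on the rightmost path, $z$ is the rightmost child of $y$ in $OT'$ and the last vertex $r$ of $OT'$ in preorder (the endpoint of the rightmost path) is a descendant of $z$; hence $p(w) \le p(r) \le z.scope.u$ for every $w \in V(OT')$.

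For the ``only if'' direction, suppose $OT = RExtend(OT',x,y)$ is an occurrence tree. By the definition of rightmost path extension, $y$ is the parent of $x$ in $OT$ and therefore an ancestor of $x$ in $T$, which gives $y.scope.l < x.scope.l \le y.scope.u$, and in particular the right inequality $x.scope.l \le y.scope.u$. For the left inequality, note that $p(z) > p(y)$ because $y$ is a proper ancestor of $z$; Proposition~\ref{prop:RPE1}(iii), applied to $w = z$, forces $z$ to be a left relative of $x$ in $T$, i.e. $z.scope.u < x.scope.l$.

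For the ``if'' direction, assume $z.scope.u < x.scope.l \le y.scope.u$ and verify the three hypotheses of Proposition~\ref{prop:RPE2}. Condition (i): from $p(y) < p(z) \le z.scope.u < x.scope.l$ and $x.scope.l \le y.scope.u$ we obtain $y.scope.l < x.scope.l \le y.scope.u$, so $y$ is an ancestor of $x$. Condition (iii): by the second structural fact, $p(w) \le z.scope.u < x.scope.l = p(x)$ for every $w \in V(OT')$. Condition (ii): any $w \in V(OT')$ with $p(w) > p(y)$ is a proper descendant of $y$ lying in the subtree of $z$ or of a left sibling $c$ of $z$ (as $z$ is the rightmost child of $y$ in $OT'$); in the first case $w.scope.u \le z.scope.u$, and in the second, $c$ being a left relative of $z$ in $T$, $w.scope.u \le c.scope.u < z.scope.l \le z.scope.u$. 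Either way $w.scope.u \le z.scope.u < x.scope.l$, so $x$ is not a descendant of $w$ and no such $w$ is an ancestor of $x$; hence $y$ is the largest ancestor of $x$ in $OT'$. Proposition~\ref{prop:RPE2} then yields that $RExtend(OT',x,y)$ is an occurrence tree.

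The step I expect to be most delicate is condition (iii) (and the closely related condition (ii)) in the ``if'' direction: deducing that $p(x)$ exceeds the preorder of \emph{every} vertex of $OT'$ from a single inequality involving $z$. The resolution rests on the observation that the rightmost-path child $z$ dominates the preorder tail of $OT'$ --- the last preorder vertex of $OT'$ sits in the subtree of $z$, so $z.scope.u$ bounds the preorder of all of $OT'$ --- together with the homeomorphism clause forcing siblings of $OT'$ to remain non-nested in $T$, so that a left sibling $c$ of $z$ satisfies $c.scope.u < z.scope.l$.
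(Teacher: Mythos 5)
Your proof is correct and takes essentially the same route as the paper's: the ``only if'' direction is obtained from Proposition~\ref{prop:RPE1} and the nesting of scopes, and the ``if'' direction by verifying the three hypotheses of Proposition~\ref{prop:RPE2}, all translated through the Dietz scope inequalities. The only difference is that you make explicit a step the paper compresses into ``and hence of any vertex in $OT'$'' --- namely that $z.scope.u$ bounds the preorder of \emph{every} vertex of $OT'$ (via the rightmost-path endpoint and the fact that siblings in $OT'$ stay non-nested in $T$) --- which is added rigor rather than a different approach.
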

\begin{proof}
First, assume $RExtend(OT',x,y)$ is an occurrence tree.
By Proposition~\ref{prop:RPE1}, $y$ is the largest vertex of $OT'$ that is an ancestor of $x$,
hence, in the database tree $T$, the tree rooted at $x$ is a subtree of the tree rooted at $y$.
That means
\begin{equation}
\label{eq:a}
y.scope.l < x.scope.l \leq x.scope.u \leq y.scope.u
\end{equation}
Vertex $z$ and all vertices in the subtree of $z$ are left
relatives of $x$ and have a preorder number smaller than that of $x$.
Hence $z.scope.u < x.scope.l$.
Combining with Inequation~\ref{eq:a}, we obtain $z.scope.u < x.scope.l \leq y.scope.u$.


For the other direction,
Inequation \ref{eq:proposition2:eq0} yields that 
$x$ is a right relative of $z$ and it is in the scope of
the subtree of $y$, hence, $y$ is an ancestor of $x$ (($i$) of Proposition~\ref{prop:RPE2})
and $z$ is not an ancestor of $x$, so $y$ is the largest ancestor of $x$ that
belongs to $OT'$ (($ii$) of Proposition~\ref{prop:RPE2}).
Also, the preorder number of $x$
is larger than that of any vertex in the subtree of $z$ and hence of any
vertex in $OT'$ (($iii$) of Proposition~\ref{prop:RPE2}).
Hence, by Proposition~\ref{prop:RPE2}, $RExtend(OT',x,y)$ is an occurrence tree.
$\blacksquare$
\end{proof}

\begin{proposition}
\label{proposition:rightmostpathextensionb}
Let $OT'$ be an occurrence tree of a tree pattern $P'$ in a database tree $T$, $x \in V(T) \setminus  V(OT')$
and $y$ the rightmost vertex of $OT'$.
We have:
$RExtend(OT',x,y)$ is an occurrence tree iff
\begin{equation}
\label{eq:proposition3:eq0}
y.scope.l < x.scope.l  \text{ and }  x.scope.u \leq y.scope.u
\end{equation} 
\end{proposition}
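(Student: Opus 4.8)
The plan is to treat this as the degenerate case of Proposition~\ref{proposition:rightmostpathextensiona}: since $y$ is now the rightmost vertex of $OT'$, there is no child $z$ of $y$ on the rightmost path, so the lower bound $z.scope.u < x.scope.l$ disappears and the two-sided constraint collapses to the single ancestor condition in Inequation~\ref{eq:proposition3:eq0}. I would prove the two directions separately, using Proposition~\ref{prop:RPE1} for the forward implication and Proposition~\ref{prop:RPE2} for the converse, and I would rely throughout on Dietz's scope characterization to pass between scope inequalities and the ancestor/relative relations (recall that $y$ is an ancestor of $x$ in $T$ iff $y.scope.l < x.scope.l$ and $x.scope.u \le y.scope.u$).

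For the forward direction I would assume that $RExtend(OT',x,y)$ is an occurrence tree and invoke Proposition~\ref{prop:RPE1}, which tells us that $y$ is an ancestor of $x$ in $T$ (in fact the preorder-largest vertex of $OT'$ that is an ancestor of $x$). Reading this ancestor relation through the Dietz encoding yields precisely $y.scope.l < x.scope.l$ and $x.scope.u \le y.scope.u$, i.e. Inequation~\ref{eq:proposition3:eq0}. No further lower bound appears, exactly because $y$ has no right sibling on the rightmost path of $OT'$.

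For the converse I would assume the two inequalities and verify the three hypotheses of Proposition~\ref{prop:RPE2}. Condition (i), that $y$ is an ancestor of $x$, is simply the Dietz reading of the inequalities. The crux for conditions (ii) and (iii) is the observation that, being the rightmost vertex, $y$ carries the maximum preorder number among all vertices of $OT'$, so $p(w) \le p(y)$ for every $w \in V(OT')$. This makes $y$ automatically the preorder-largest vertex of $OT'$, hence the largest among those vertices of $OT'$ that are ancestors of $x$, giving (ii); and combining $p(y) < p(x)$ (from $y.scope.l < x.scope.l$) with $p(w) \le p(y)$ gives $p(x) > p(w)$ for all $w \in V(OT')$, which is (iii). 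With (i)--(iii) established, Proposition~\ref{prop:RPE2} immediately yields that $RExtend(OT',x,y)$ is an occurrence tree.

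The argument is essentially routine, and I do not expect a genuine obstacle; the one point that deserves explicit justification is the claim that the rightmost vertex holds the largest preorder number in $OT'$, which follows directly from the definition of the rightmost path as the path to the last vertex visited in the preorder traversal. Conceptually, the only thing to appreciate is that this maximality is what plays, in the degenerate case, the role that the explicit left-relative bound through $z$ played in Proposition~\ref{proposition:rightmostpathextensiona}, so that hypotheses (ii) and (iii) of Proposition~\ref{prop:RPE2} now hold for free.
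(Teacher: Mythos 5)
Your proof is correct and follows essentially the same route as the paper's: the forward direction via Proposition~\ref{prop:RPE1} read through the Dietz scope encoding, and the converse by verifying conditions (i)--(iii) of Proposition~\ref{prop:RPE2}. In fact you are more explicit than the paper, which leaves implicit the key observation you spell out --- that the rightmost vertex of $OT'$ carries the maximal preorder number in $OT'$, which is what makes conditions (ii) and (iii) of Proposition~\ref{prop:RPE2} hold for free.
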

\begin{proof}
First, assume $RExtend(OT',x,y)$ is an occurrence tree.
Similar to the proof of Proposition~\ref{proposition:rightmostpathextensiona}, 
by Proposition~\ref{prop:RPE1}, we get
\begin{equation}
\label{eq:a2}
y.scope.l < x.scope.l \leq x.scope.u \leq y.scope.u
\end{equation}

For the other direction, we assume $y.scope.l < x.scope.l$ and $x.scope.u \leq y.scope.u$.
This implies that the tree rooted
at $x$ is a subtree of the tree rooted at $y$ and that the preorder number
of $x$ is larger than the preorder number of $y$ and hence that all conditions of Proposition~\ref{prop:RPE2} are satisfied,
and $RExtend(OT',x,y)$ is an occurrence tree.
$\blacksquare$
\end{proof}

For example, in Figure \ref{fig:3}, 
first let $OT'$ refer to the occurrence tree consisting of a single vertex $0$.
The scopes of vertices $0$ and $1$ are $(0,5)$ and $(1,1)$, respectively.
The lower bound of the scope of vertex $1$ is greater than the 
the lower bound of the scope of vertex $0$;
and the upper bound of the scope of vertex $1$ is smaller than or equal to
the upper bound of the scope of vertex $0$.
Therefore, Inequation \ref{eq:proposition3:eq0} holds and $RExtend(OT',1,0)$ is an occurrence tree.
Now, let $OT'$ refer to the occurrence tree consisting of vertices $0$ and $1$. 
The scope of vertex $5$ is $(5,5)$.
The lower bound of the scope of vertex $5$ is greater than the upper bound of the scope of vertex $1$;
and it is smaller than or equal to the upper bound of the scope of vertex $0$.
Hence, Inequation \ref{eq:proposition2:eq0} holds and $RExtend(OT',5,0)$ is an occurrence tree.

\subsection{\textbf{Occ-list}: an efficient data structure for tree mining under subtree homeomorphism}
\label{sec:occ}

For the rightmost path extension of an occurrence tree, it suffices to know
its rightmost path. Different occurrences of a pattern can have the
same rightmost path. The key improvement over previous work is that we
only store information about the rightmost path of occurrence trees and
that different occurrence trees with the same rightmost path are
represented by the same data element. All occurrences of a pattern in
a database tree are represented by \textbf{occ-list}, a list of occurrences. An
element \textbf{occ} of \textbf{occ-list} represents all occurrences with a particular
rightmost path.
The element has four components:
\begin{itemize}
\item $TId$: the identifier of the database tree that contains the occurrences represented by \textbf{occ}. 
\item $scope$: the scope in the database tree $TId$ of the last vertex in the
rightmost path of the occurrences represented by \textbf{occ},
\item $RP$: an array containing the upper bounds of the scopes of the
vertices in the rightmost path of the occurrences represented by \textbf{occ},
i.e., with $x$ the vertex at rdepth $j$ in the rightmost path of the
pattern $P$ and $\varphi$ one of the occurrences represented by \textbf{occ},
$RP[j]= \varphi(x).scope.u$; note that this is the same value for all occurrences
$\varphi$ represented by \textbf{occ}\footnote{The upper bound of the scope of the
last vertex is already available in scope; for convenience of presentation, the information is duplicated in $RP$.}.
\item $ multiplicity$: the number of occurrences represented by \textbf{occ}.
\end{itemize}

For all occurrences of a pattern $P$ that have the same $TId$, $scope$ and $RP$,
one \textbf{occ} in the \textbf{occ-list} of $P$ is generated and 
its $ multiplicity $ shows the number of such occurrences.
All \textbf{occ}s of a pattern of size $1$ have $multiplicity$ $1$ as their
(single vertex) rightmost paths are all different.
Every occurrence is represented by exactly one \textbf{occ}.
We refer to the \textbf{occ-list} of $P$ by \textbf{occ-list}($P$). 
It is easy to see that the frequency of $P$ is equal to
$\sum_{oc \in \text{\textbf{occ-list}}(P)} oc.multiplicity$.
An example of \textbf{occ-list} is shown in Figure~\ref{fig:5}.

\begin{figure*}
\centering
\includegraphics[scale=0.7]{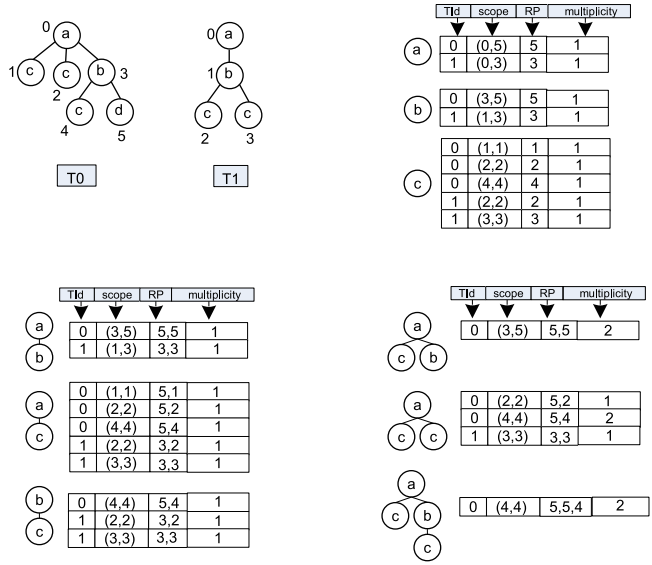}
\caption{
An example of \textbf{occ-list}. $T0$ and $T1$ are two database trees and minimum-support is equal to $2$.
The figure presents the \textbf{occ-list}s of some frequent 1-tree patterns,
frequent 2-tree patterns, frequent 3-tree patterns and frequent 4-tree patterns.
\label{fig:5}} 
\end{figure*}

\subsection{Operations on the \textbf{occ-list} data structure}
\label{sec:join}

Let $P$ be a tree of size $k+1$ generated by adding a vertex $v$ to a tree $P'$ of size $k$.
There are two cases:
\begin{enumerate}
\item $v$ is added to the rightmost vertex of $P'$. We refer to this case as \textit{leaf join}.
\item $v$ is added to a vertex in the rightmost path of $P'$, but not its rightmost vertex.
We refer to this case as \textit{inner join}.
\end{enumerate}
They correspond to Propositions~\ref{proposition:rightmostpathextensiona} and \ref{proposition:rightmostpathextensionb}.

\begin{proposition}[leaf join]
\label{proposition:leafjoin}
Let $v$ be a one element pattern, $P'$ a pattern with $u$ as the rightmost vertex,
$d = rdep_{P'}(u)$ and $P=RExtend(P',v,u)$.
Let \textbf{occ-list}($P'$), \textbf{occ-list}($v$)
and \textbf{occ-list}($P$) be the representations of the
occurrences of $P'$, $v$ and $P$, respectively.
We have: $oc \in  \textbf{occ-list}(P)$ iff there exists
an $oc' \in \textbf{occ-list}(P')$ and an $ov \in \textbf{occ-list}(v)$ such that
\begin{itemize}
\item[($i$)] 
$oc'.TId = ov.TId = oc.TId$ (all occurrences are from the same database tree),
\item[($ii$)]
$oc'.scope.l < ov.scope.l$ and $ov.scope.u \leq oc'.scope.u$,
\item[($iii$)]
$oc.RP[i] = oc'.RP[i]$ ($0 \leq i \leq d$) and $oc.RP[d+1]=ov.scope.u$ (i.e., copy of $oc'.RP$ and an extra element),
\item[($iv$)]
$oc.scope = ov.scope$, and
\item[($v$)]
$oc.multiplicity = oc'.multiplicity$
\end{itemize}
\end{proposition}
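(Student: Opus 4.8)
The plan is to set up a correspondence between occurrences of $P$ and pairs consisting of an occurrence of $P'$ together with an occurrence of the single vertex $v$, and then transport this correspondence to the level of \textbf{occ} elements. Recall that any occurrence $\varphi$ of $P$ restricts to an occurrence $\varphi'$ of $P'$ (forget the image of $v$) with occurrence tree $OT'=\mathcal{OT}(\varphi')$, and that $x:=\varphi(v)$ is an occurrence of $v$; conversely $\mathcal{OT}(\varphi)=RExtend(OT',x,y)$ where $y:=\varphi(u)=\varphi'(u)$ is the image of the rightmost vertex $u$ of $P'$, hence the rightmost vertex of $OT'$. Since $u$ is the rightmost vertex, this is exactly the leaf join situation governed by Proposition~\ref{proposition:rightmostpathextensionb}.

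For the forward direction I would take an arbitrary $oc\in\textbf{occ-list}(P)$ and a representative occurrence $\varphi$ carried by it, and produce the $oc'\in\textbf{occ-list}(P')$ and $ov\in\textbf{occ-list}(v)$ representing $\varphi'$ and $x$. Condition~($i$) is immediate since all three live in the same database tree. Condition~($ii$) is where Proposition~\ref{proposition:rightmostpathextensionb} enters: because $\mathcal{OT}(\varphi)$ is an occurrence tree obtained by a leaf join, the proposition yields $y.scope.l<x.scope.l$ and $x.scope.u\le y.scope.u$, which is precisely $oc'.scope.l<ov.scope.l$ and $ov.scope.u\le oc'.scope.u$ after noting $oc'.scope=y.scope$ and $ov.scope=x.scope$. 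Conditions~($iii$) and~($iv$) are bookkeeping about rightmost paths: the rightmost path of $P$ is that of $P'$ (vertices at rdepth $0,\dots,d$) followed by $v$ at rdepth $d+1$, so the upper bounds stored in $RP$ agree on indices $0,\dots,d$, the new entry $oc.RP[d+1]$ equals $\varphi(v).scope.u=ov.scope.u$, and the scope of the last rightmost-path vertex of $P$ is $\varphi(v).scope=ov.scope$.

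For the reverse direction I would start from $oc',ov$ satisfying ($i$)--($ii$), pick any occurrence $\varphi'$ carried by $oc'$, and let $x$ be the unique vertex with $x.scope=ov.scope$ (unique because $v$ is a one-element pattern, so every \textbf{occ} of $v$ has multiplicity $1$ and a distinct single-vertex rightmost path). Condition~($ii$) is exactly the hypothesis of Proposition~\ref{proposition:rightmostpathextensionb}, so $RExtend(\mathcal{OT}(\varphi'),x,y)$ is an occurrence tree and extends $\varphi'$ to an occurrence $\varphi$ of $P$, where the attachment vertex is forced to be the image of the rightmost vertex $u$. Reading off $TId$, $scope$ and $RP$ of $\varphi$ shows they match ($iii$)--($iv$), so $\varphi$ is carried by an \textbf{occ} of the prescribed shape, which therefore lies in $\textbf{occ-list}(P)$.

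The main obstacle is the multiplicity identity~($v$). The clean route is to prove that, once $ov$ (hence $x$) is fixed, restriction to $P'$ is a bijection between the occurrences of $P$ carried by $oc$ and the occurrences of $P'$ carried by $oc'$. Injectivity is easy: two occurrences of $P$ that agree on $P'$ and both send $v$ to the same $x$ coincide. The delicate point is the other half, namely that this restriction is well defined and surjective onto the $oc'$-occurrences, equivalently that the $P'$-encoding $(TId,scope,RP)$ is recovered from the $P$-encoding by deleting the last $RP$ entry. This is what guarantees simultaneously that distinct occurrences carried by $oc'$ do not scatter across several \textbf{occ}s of $P$ and that $oc$ is not fed by some other $oc''$. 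Here one must argue that the scope upper bounds along a rightmost path, together with the final scope, pin down the path under the Dietz numbering; this is the substantive technical content, and I expect it to be the real work of the proof, the scope conditions ($i$)--($iv$) being essentially immediate from Proposition~\ref{proposition:rightmostpathextensionb}. Granting the bijection, $oc.multiplicity$ and $oc'.multiplicity$ count the same finite set, which gives~($v$).
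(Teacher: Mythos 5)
Your plan follows the same route as the paper's proof: decompose the occurrences carried by $oc$ into occurrences of $P'$ together with one fixed occurrence of $v$, get condition ($ii$) in both directions from Proposition~\ref{proposition:rightmostpathextensionb}, and settle the multiplicity condition ($v$) by a restriction--extension bijection. Conditions ($i$), ($iii$), ($iv$) are indeed bookkeeping.

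The gap is in the one step you defer, and it is not a benign deferral: the claim you propose to prove --- that the stored data $(TId,\, scope,\, RP)$ pins down the rightmost path under the Dietz numbering --- is false in general. Take the database tree to be a path of three vertices with preorder numbers $0,1,2$, all labeled $a$; their scopes are $(0,2)$, $(1,2)$, $(2,2)$. Let $P'$ be the single vertex $a$ and $P$ the two-vertex pattern ($a$ with an $a$-child), so $d=0$. The occurrence mapping the pattern root to $0$ and the leaf to $2$, and the occurrence mapping the root to $1$ and the leaf to $2$, have distinct rightmost paths ($0,2$ versus $1,2$) but identical stored data, namely $scope=(2,2)$ and $RP=[2,2]$, because vertices $0$ and $1$ have the same scope upper bound. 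Their restrictions to $P'$ have different data ($scope=(0,2)$ versus $(1,2)$), so the $P'$-data is \emph{not} recoverable from the $P$-data by dropping the last $RP$ entry: the lower bound $y.scope.l$ of the attachment vertex is simply not stored in $oc$. Worse, if one insists on reading an \textbf{occ} as ``equivalence class of occurrences with equal $(TId,scope,RP)$'', the proposition itself fails on this example: the element with $scope=(2,2)$, $RP=[2,2]$ has multiplicity $2$, while every admissible $oc'$ (the occurrence at vertex $0$ or at vertex $1$) has multiplicity $1$, so ($v$) cannot hold for any choice of $oc'$ and $ov$.

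The way out --- and the reading the paper's own proof silently relies on --- is that an \textbf{occ} represents all occurrences sharing a particular rightmost path as an actual sequence of database-tree vertices, not all occurrences sharing the stored triple; the two readings coincide only for databases in which no two vertices on a path carry the same label, a property the paper discusses separately. Under the per-path reading, your bijection needs no Dietz-numbering argument at all: in a leaf join the rightmost path of $P$ is that of $P'$ followed by $v$, so the occurrences of $P$ with rightmost path $w_0,\dots,w_{d+1}$ are exactly the occurrences of $P'$ with rightmost path $w_0,\dots,w_d$ extended by $v \mapsto w_{d+1}$. Well-definedness of the restriction is then immediate, injectivity is as you say, and surjectivity is precisely the reverse direction of Proposition~\ref{proposition:rightmostpathextensionb}, whose hypothesis involves only the scopes of $w_d$ and $w_{d+1}$ and hence holds for every occurrence carried by $oc'$ simultaneously. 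So the real missing work is not a ``pinning down'' lemma (which is unprovable), but fixing the correct invariant for what an \textbf{occ} represents; once that is fixed, the rest of your argument closes and coincides with the paper's.
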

\begin{proof}
First, assume $oc \in  \textbf{occ-list}(P)$, hence it represents $oc.multiplicity$
occurrence trees of pattern $P$ in database tree $oc.TId$.
Each of these occurrence trees share the same rightmost path.
Hence they can be decomposed into occurrence trees of pattern $P'$ sharing the same rightmost path
and a particular occurrence of $v$. The latter is represented by an
element $ov$ of \textbf{occ-list}($v$).
The formers are represented by an element
$oc'$ of \textbf{occ-list}($P'$).
Now we show that conditions ($i$), \ldots, ($v$) hold between $oc$ and
these elements $oc'$ and $ov$.
Condition ($i$) holds because all occurrences are in database tree $oc.TId$,
($ii$) follows from Proposition~\ref{proposition:rightmostpathextensionb},
condition ($iii$) follows because the rightmost
path of the occurrence trees represented by $oc'$ is identical to that of $oc$,
except for the last element which is removed, ($iv$) follows from the
definition of $scope$, and ($v$) holds because $oc$ and $oc'$ represent the
same number of occurrences.

Second, assume there are an $oc' \in \textbf{occ-list}(P')$ and an $ov \in \textbf{occ-list}(v)$ such
that conditions ($i$), \ldots, ($v$) hold.
From ($i$) it follows that $oc'$ and $ov$ present
occurrences in the same database tree.
Because ($ii$) holds, if follows from Proposition ~\ref{proposition:rightmostpathextensionb} that all occurrence
trees of $P'$ represented by $oc'$ can be extended with $ov$ into occurrence
trees of $P$; all these occurrence trees have the same rightmost path
and have $ov$ as their rightmost vertex, moreover they are the only ones
in the database tree $oc'.TId$ with such a rightmost path.
Hence, the element $oc$ that satisfies properties ($i$), \ldots, ($v$) is indeed an
element of \textbf{occ-list}($P$) as has to be proven.
$\blacksquare$
\end{proof}

Figure~\ref{fig:6} illustrates the proposition. The proposition is the basis
for the $leaf\_join$ operation in Algorithm~\ref{algorithm:TPMiner} below.

\begin{figure*}
\centering
\includegraphics[scale=0.7]{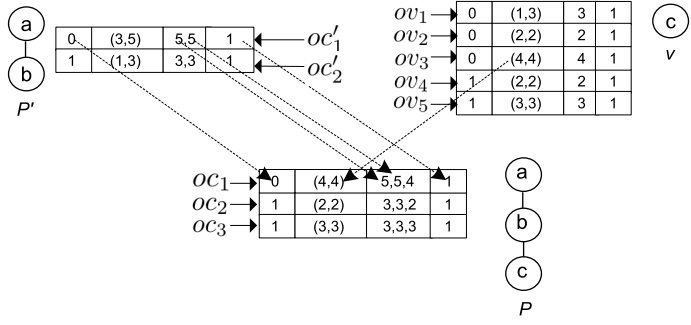}
\caption{\label{fig:6}
Details of the relationship between \textbf{occ-list}($P'$), \textbf{occ-list}($v$) and
\textbf{occ-list}($P$) for the database trees of Figure~\ref{fig:5}.
The entries $oc'_1$, $ov_3$ and $oc_1$ satisfy the properties of Proposition~\ref{proposition:leafjoin}.
Also the tuples $(oc'_2,ov_4, oc_2)$ and $(oc'_2, ov_5, oc_3)$ satisfy the properties.
The proposition is exploited in Algorithm \ref{algorithm:TPMiner} below.
Its $leaf\_join$ operation uses \textbf{occ-list}($P'$) and \textbf{occ-list}($v$) to compute \textbf{occ-list}($P$).
}
\end{figure*}

In contrast with leaf join, which is performed between \underline{\textit{one}} \textbf{occ} of a tree pattern and
\underline{\textit{one}} \textbf{occ} of a vertex,
inner join is performed between \underline{\textit{a set of \textbf{occ}s}} of a tree pattern and
\underline{\textit{one}} \textbf{occ} of a vertex.

\begin{proposition}[Inner join]
\label{proposition:innerjoin}
Let $v$ be a one element pattern, $P'$ a pattern,
$u$ a vertex on the rightmost path of $P'$ but not the rightmost vertex,
$c = rdep_{P'}(u)$ and $P=RExtend(P',v,u)$.
Let \textbf{occ-list}($P'$), \textbf{occ-list}($v$) and \textbf{occ-list}($P$)
be the representations of the occurrences of $P'$, $v$ and $P$, respectively.
We have: $oc \in \textbf{occ-list}(P)$ iff there exists 
a subset $oc'_1 , \hdots, oc'_m$ ($m \geq 1$) of \textbf{occ-list}($P'$)
and an $ov \in \textbf{occ-list}(v)$ such that
\begin{itemize}
\item[($i$)] 
$ov.TId=oc'_1.TId = \ldots = oc'_m.TId = oc.TId$ (all occurrences are from the same database tree),
\item[($ii$)]
$oc'_i.RP[k]= oc'_j.RP[k]$, for all $i,j \in [1..m]$ and for all $k \in [0..c]$,
\item[($iii$)]
$oc'_i.RP[c+1]  < ov.scope.l \leq oc'_i.RP[c]$ for all $i \in [1..m]$,
\item[($iv$)]
$oc'_1 , \hdots, oc'_m$ is maximal with the conditions ($i$)-($iii$),
\item[($v$)]
$oc.RP[i] = oc'.RP[i]$, $0 \leq i \leq c$, and $oc.RP[c+1]=ov.scope.u$ (copy of part of $oc'.RP$ and an extra element $ov.scope.u$),
\item[($vi$)]
$oc.scope = ov.scope$, and
\item[($vii$)]
$oc.multiplicity = \sum_{i=1}^m oc'_i.multiplicity $.
\end{itemize}
\end{proposition}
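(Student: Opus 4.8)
The plan is to mirror the bidirectional argument used for leaf join (Proposition~\ref{proposition:leafjoin}), but with the new ingredient that \emph{several} \textbf{occ}s of $P'$ may collapse into a single \textbf{occ} of $P$. Before starting, I would translate the scope test of Proposition~\ref{proposition:rightmostpathextensiona} into the language of the $RP$ arrays. Writing $y=\varphi'(u)$ for the image of the attachment vertex at rdepth $c$ and $z$ for its rightmost-path child at rdepth $c+1$ in an occurrence tree represented by $oc'$, we have $y.scope.u = oc'.RP[c]$ and $z.scope.u = oc'.RP[c+1]$, while $x=\varphi(v)$ satisfies $x.scope.l = ov.scope.l$. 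With this dictionary, condition ($iii$) is exactly the inequality $z.scope.u < x.scope.l \le y.scope.u$ of Proposition~\ref{proposition:rightmostpathextensiona}, so each individual extension is an occurrence tree precisely when ($iii$) holds.

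For the forward direction I would start from an arbitrary $oc \in \textbf{occ-list}(P)$, which stands for $oc.multiplicity$ occurrence trees of $P$ in tree $oc.TId$ all sharing one rightmost path $root,\dots,y,x$ (with $x$ at rdepth $c+1$). Deleting the leaf $x$ from each yields an occurrence tree of $P'$; the deleted vertex is always the same vertex of $T$, so it is captured by a single $ov$ with $ov.scope=oc.scope$. The crucial observation is that these occurrence trees of $P'$ need only agree on their rightmost path \emph{up to} rdepth $c$: beyond $u$ those vertices lie off the rightmost path of $P$ and may differ. Grouping them by their full rightmost path gives the \textbf{occ}s $oc'_1,\dots,oc'_m$. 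Conditions ($i$), ($ii$), ($v$), ($vi$) are then immediate from the construction, and ($iii$) follows from the translated Proposition~\ref{proposition:rightmostpathextensiona} applied to each group.

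The delicate points are ($iv$) and ($vii$). For maximality I would argue by contradiction: any further \textbf{occ} of $P'$ meeting ($i$)--($iii$) with the same $ov$ would, by Proposition~\ref{proposition:rightmostpathextensiona}, extend with $ov$ to occurrence trees of $P$ whose rightmost path is determined solely by the common prefix $RP[0..c]$ and $ov.scope.u$ --- that is, exactly the rightmost path of $oc$; hence those occurrence trees are already represented by $oc$, so the omitted \textbf{occ} must lie among $oc'_1,\dots,oc'_m$ after all. For ($vii$) I would check that the correspondence is a bijection onto the occurrence trees counted by $oc$: distinct occurrence trees of $P'$ differ in the image of some vertex of $P'$, which persists in $P$ since $V(P)=V(P')\cup\{v\}$, so they extend to distinct occurrence trees of $P$ (no double counting), and maximality guarantees none is missed. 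Summing multiplicities then gives $oc.multiplicity=\sum_{i=1}^m oc'_i.multiplicity$.

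The reverse direction runs the same machinery backwards: given $oc'_1,\dots,oc'_m$ and $ov$ satisfying ($i$)--($vii$), condition ($iii$) with Proposition~\ref{proposition:rightmostpathextensiona} lets me extend every occurrence tree represented by every $oc'_i$ by $ov$; by ($ii$) all results share one rightmost path, so they are represented by a single element whose fields match ($v$), ($vi$), ($vii$), and maximality ($iv$) certifies that this element collects \emph{all} occurrence trees of $P$ with that rightmost path. I expect the main obstacle to be precisely the bookkeeping of this many-to-one collapse --- making the maximality condition ($iv$) do the double duty of preventing both omissions and, together with the injectivity remark, double counting, so that the single sum in ($vii$) is exact.
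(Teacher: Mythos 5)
Your proof is correct and follows essentially the same route as the paper's: decompose the occurrence trees of $P$ sharing a rightmost path into occurrence trees of $P'$ (grouped by their full rightmost paths, which agree only up to rdepth $c$) plus a single occurrence of $v$, and invoke Proposition~\ref{proposition:rightmostpathextensiona} in both directions. Your handling of the delicate points---the explicit dictionary translating condition ($iii$) into the scope inequality of Proposition~\ref{proposition:rightmostpathextensiona}, the contradiction argument for maximality ($iv$), and the bijection argument justifying the multiplicity sum in ($vii$)---is in fact more careful than the paper's proof, which asserts these steps without elaboration.
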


\begin{proof}
First, assume $oc \in \textbf{occ-list}(P)$, hence it represents $oc.multiplicity$
occurrence trees of pattern $P$ in database tree $oc.TId$.
All these occurrence trees share the same rightmost path.
Hence, they can be decomposed into occurrence trees of pattern $P'$
sharing the first $c+1$ vertices of the rightmost path
and a particular occurrence of $v$.
The latter is represented by an
element $ov$ of \textbf{occ-list}($v$).
The formers are represented by elements $oc'_1,\hdots, oc'_m$ of \textbf{occ-list}($P'$).
Now we show that conditions ($i$), \ldots, ($vii$) hold between $oc$ and
these elements $oc'_1,\hdots, oc'_m$ and $ov$.
Condition ($i$) holds because all occurrence trees are in database tree $oc.TId$,
($ii$) holds because all occurrence trees represented by $oc'_1,\hdots, oc'_m$ share the first $c+1$ vertices of the rightmost paths,
($iii$) and ($iv$) follow from Proposition~\ref{proposition:rightmostpathextensiona},
($v$) follows because the first $c+1$ vertices of the rightmost
paths of the occurrence trees represented by $oc$ are identical to those of $oc'_1,\hdots,oc'_m$,
and the rightmost vertex of the occurrence trees represented by $oc$ is the vertex represented by $ov$,
($vi$) follows from the definition of $scope$,
and ($vii$) holds because $oc$ represents $\sum_{i=1}^m oc'_i.multiplicity$ occurrences.

Second, assume there are a maximal subset $oc'_1, \hdots, oc'_m$ of $\textbf{occ-list}(P')$ and an $ov \in \textbf{occ-list}(v)$ such
that conditions (($i$), \ldots, ($vii$) hold.
From ($i$) it follows that $oc'$ and $ov$ present
occurrences in the same database tree.
Because ($iii$) and ($iv$) hold, if follows from Proposition~\ref{proposition:rightmostpathextensiona} that all occurrence
trees of $P'$ represented by $oc'_1,\hdots, oc'_m$ can be extended with $ov$ into occurrence
trees of $P$;
all these extended occurrence trees have the same rightmost path
and have $ov$ as their rightmost vertex, moreover they are the only ones
in the database tree $oc'.TId$ with such a rightmost path.
Hence, the element $oc$ that satisfies properties ($i$), \ldots, ($vii$) is indeed an
element of \textbf{occ-list}($P$) as has to be proven.
$\blacksquare$
\end{proof}

\begin{figure*}
\centering
\includegraphics[scale=0.77]{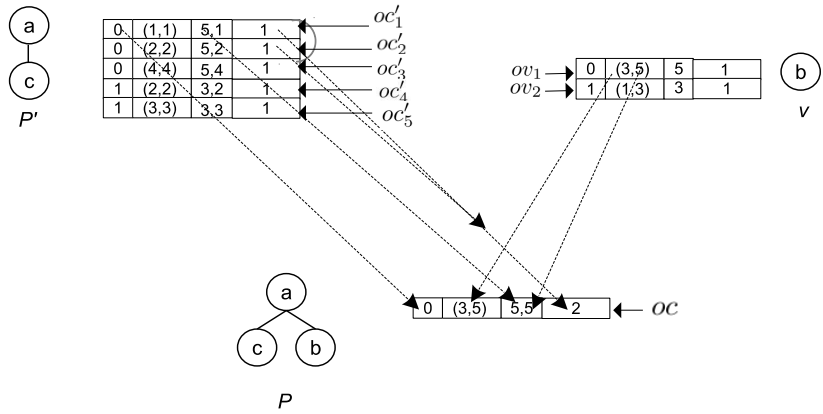}
\caption{\label{fig:7}
Details of the relationship between \textbf{occ-list}($P'$), \textbf{occ-list}($v$) and
\textbf{occ-list}($P$) for the database trees of Figure~\ref{fig:5}.
The entries $\{oc'_1,oc'_2\}$, $ov_1$ and $oc$ satisfy the properties of Proposition~\ref{proposition:innerjoin}.
As $c=0$, rightmost paths of the occurrence trees represented by $oc'_1$ and $oc'_2$ share the first vertex,
that is vertex $0$ of $T0$;
and rightmost paths of the occurrence trees represented by $oc$ share the first and second vertices, 
that are vertices $0$ and $3$ of $T0$.
The proposition is exploited in Algorithm~\ref{algorithm:TPMiner} below.
Its $inner\_join$ operation uses \textbf{occ-list}($P'$) and \textbf{occ-list}($v$) to compute \textbf{occ-list}($P$).
}
\end{figure*}

Figure~\ref{fig:7} illustrates the proposition. The proposition is the basis
for the $inner\_join$ operation in Algorithm~\ref{algorithm:TPMiner} below.



\subsection{Complexity analysis}
\label{sec:completely}


\paragraph{Space complexity.}

Given a database $\mathcal{D}$,
space complexity of the \textbf{occ-list}
of a pattern of size 1 is $O(n \times |\mathcal{D}|)$, where $n$ is the
maximum number of vertices that a database tree $T \in \mathcal{D}$ has.
For larger patterns, space complexity of the \textbf{occ-list} of a pattern
$P$ is $O( b \times d \times |\mathcal{D}|)$,
where
$b$ is the maximum number of occurrences with distinct rightmost paths that $P$ has in a database tree $T \in \mathcal{D}$, and
$d$ is the length of the rightmost path of $P$.

\begin{figure}
\begin{center}
\includegraphics[width=0.45\textwidth]{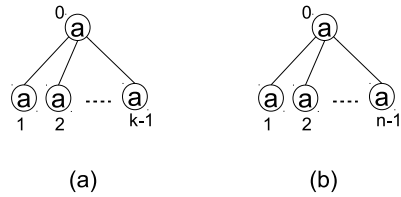}
\end{center}
\caption{\label{fig:complexity1}
Figure (a) shows a tree pattern $P$ and (b) a database tree $T$.
The number of occurrences of $P$ in $T$ is exponential in terms of $n$ and $k$.
In this case, the size of \textbf{occ-list} is linear, however, the size of the data-structures generated by the other algorithms
is exponential.
}
\end{figure}

Compared to data-structures generated by other algorithms such as VTreeMiner,
\textbf{occ-list} is often substantially more compact.
Given a database $\mathcal{D}$, the size of the data-structure generated by VTreeMiner for a pattern $P$ is $O( e \times k \times |\mathcal{D}|)$,
where $e$ is the maximum number of occurrences that $P$ has in a database tree $T \in \mathcal{D}$ and $k$ is $|V(P)|$.
We note that on one hand, $k \geq d $ and the other hand, $e \geq b$.
In particular, $e$ can be significantly larger than $b$, e.g., it can be exponentially (in terms of $n$ and $k$) larger than $b$. 
Therefore, in the worst case, the size of the data-structure generated by VTreeMiner is exponentially larger than \textbf{occ-list}
(and it is never smaller than \textbf{occ-list}).
An example of this situation is shown in Figure~\ref{fig:complexity1}.
In this figure, pattern $P$ has 
$
{n-1 \choose k-1} \geq  { \left( \frac{n-1}{k-1} \right) }^{k-1} \geq { \left( \frac{n}{k} \right) }^{k / 2} 
$
occurrences in the database tree $T$ ($k > 2$ and $n \geq k$).
If $k = (n+1) / 2 $, the size of the data-structure generated by VTreeMiner will be $\Omega(2^{n/4}) $.
However, in this case, the size of the \textbf{occ-list} generated for $P$ is $\Omega(n) $.
More precisely, \textbf{occ-list}($P$) has $n-k+1$ \textbf{occ}s,
where for each $i$, $ k-1 \leq i \leq n-1 $, there exists an \textbf{occ} with
$TId=0$, $scope=(i,i)$, $RP=\{n-1,i\}$ and $multiplicity={i-1 \choose k-2}$.

\begin{figure}
\begin{center}
\includegraphics[width=0.31\textwidth]{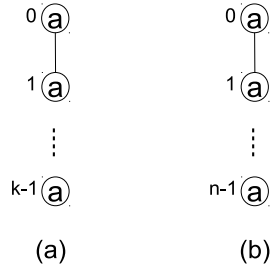}
\end{center}
\caption{\label{fig:complexity2}
Figure (a) shows a tree pattern $P$ and (b) a database tree $T$.
The number of occurrences of $P$ in $T$ is exponential in terms of $n$ and $k$.
In this case, the size of the \textbf{occ-list} and the size of the data-structures generated by the other algorithms for $P$ are exponential.
}
\end{figure}

We note that there are cases where the size of \textbf{occ-list} 
(as well as the size of the data-structures used by the other algorithms) becomes exponential.
An example of this situation is presented in Figure~\ref{fig:complexity2},
where for $k = n / 2 $, the size of \textbf{occ-list} as well as the size of \textit{scope-list}
used by VTreeMiner become exponential.
More precisely,
for each $k$-element combination of the vertices of the database tree,
there exists an \textbf{occ} in the \textbf{occ-list} of the tree pattern, where
$TId$ is $0$, the lower bound of $scope$ is the preorder number of the vertex with the largest depth in the combination,
the upper bound of $scope$ is $n-1$,
$RP$ is an array of size $k$ filled by $n-1$,
and $multiplicity$ is $1$.

In most of real-world databases, such as CSLOGS \cite{36} and NASA \cite{Chalmers2},
while several vertices of a database tree have the same label,
no two vertices on the same path are identically labeled.
For trees with this property, while worst case space complexity of \textbf{occ-list} is linear,
worst case size of \textit{scope-list} remains exponential.




\paragraph{Time complexity.}
We study time complexity of leaf join and inner join:
\begin{itemize}
\item 
A leaf join between two \textbf{occ}s takes $O(d)$ time
with $d$ the length of the rightmost path of $P$.
Since a pattern larger than 1 has $O( b \times |\mathcal{D}|)$ \textbf{occ}s 
and a pattern of size 1 has $O( n \times |\mathcal{D}|)$ \textbf{occ}s 
and leaf join is performed between every pairs of \textbf{occ}s with the same $TId$,
worst case time complexity of leaf join between two \textbf{occ-list}s will be $O(d \times b \times n \times |\mathcal{D}|)$.
\item
In the inner join of the \textbf{occ-list}s of a tree pattern $P'$ and a vertex $v$, given an \textbf{occ} $ov$ of $v$,
it takes $O(h \times d)$ time 
to find subsets of the \textbf{occ-list} of $P'$ that satisfy the conditions of Proposition~\ref{proposition:innerjoin},
where $h$ is the number of \textbf{occ}s in the \textbf{occ-list} of $P'$.
We note that \textbf{occ}s of an \textbf{occ-list} can be automatically sorted
first based on their $TId$s and second, based on their $RP$s.
This makes it possible to find the subsets of the \textbf{occ-list} of $P'$
that satisfy the conditions of Proposition~\ref{proposition:innerjoin}
only by one scan of the \textbf{occ-list} of $P'$.
During this scan, it is checked whether:  
(i) the current and the previous \textbf{occ}s have the same $TId$,
(ii) the current and the previous \textbf{occ}s have the same $RP[0],\hdots,RP[c]$,
(iii) $RP[c+1]$ of the current \textbf{occ} is less than $ov.scope.l$, and
(iv) $RP[c]$ of the current \textbf{occ} is greater than or equal to $ov.scope.l$.
During the scan of \textbf{occ-list}($P'$), after finding a subset $S$
that satisfies the conditions of Proposition~\ref{proposition:innerjoin},
it takes $O(d)$ time to perform the inner join of $S$ and $ov$.
As a result, it takes $O(h \times d)$ time to perform the inner join of the \textbf{occ-list} of $P'$ and
an \textbf{occ} of $v$.
Since $h$ is $O( b \times |\mathcal{D}|)$ and
$v$ has $O(n \times|\mathcal{D}|)$ \textbf{occ}s,
and since inner join is done between every pairs of subsets $S$ and \textbf{occ}s $ov$ that have the same $TId$s,
time complexity of inner join will be $O( d \times b \times n \times |\mathcal{D}|)$.
\end{itemize}
Therefore, frequency of a pattern can be counted in $O( d \times b \times n \times |\mathcal{D}|)$ time.
We note that in VTreeMiner, frequency of a pattern is counted in $O( k \times e^2 \times |\mathcal{D}|)$ time
(recall $k \geq d$ and $e\geq b$, in particular, it is possible that $e >> b$).

\subsection{A brief comparison with other vertical frequency counting approaches}
\label{sec:briefcomparison}

As mentioned before, algorithms such as VTreeMiner \cite{36} and MB3Miner \cite{23}
need to keep track of all occurrences.
Obviously, our approach is more efficient as it simultaneously
represents and processes all occurrences sharing the rightmost path in $O(d)$ space,
where $d$ is the length of the rightmost path of the pattern.
TreeMinerD \cite{36}, developed by Zaki for computing \textit{per-tree frequency},
is more similar to our approach as it also processes rightmost paths of occurrences. 
However, there are significant differences.
First, while TreeMinerD computes only \textit{per-tree frequency},
our algorithm performs a much more expensive task and computes \textit{per-occurrence frequency}.
Second, TreeMinerD applies different join strategies.

\section{TPMiner: an efficient algorithm for finding frequent embedded tree patterns}
\label{sec:tpminer}

In this section, we first introduce the TPMiner algorithm and then,
we discuss an optimization technique used to reduce the number of generated infrequent patterns.

\subsection{The TPMiner algorithm}

Having defined the operations of leaf join and inner join and having
analyzed their properties, we can now introduce our tree pattern miner,
called TPMiner (\textbf{T}ree \textbf{P}attern \textbf{Miner}).
TPMiner builds all frequent patterns and maintains the rightmost paths of
all their occurrences.

\begin{algorithm}
\caption{High level pseudo code of the TPMiner algorithm.\label{algorithm:TPMiner}}
\begin{algorithmic} [1]
\STATE \textsf{TPMiner}
\STATE \textbf{Input:} $\mathcal{D}$ \COMMENT{a set of  database trees}, $ minsup$ \COMMENT{the  minimum support threshold}
\STATE \textbf{Output:} $\mathcal{P}$ \COMMENT{the set of frequent patterns}
\STATE Compute the set $\mathcal{P}_1$ of frequent patterns of size $1$ along with their \textbf{occ-list}s
\STATE $\mathcal{P} \leftarrow \mathcal{P}_1$
\FORALL{$P$ in $\mathcal{P}_1$}
\STATE \textsf{Extend}($P$,$\mathcal{P}_1$, $minsup$, $\mathcal{P}$)
\ENDFOR
\\\hrulefill
\end{algorithmic}
\begin{algorithmic} [1]
\STATE \textsf{Extend}($P$,$\mathcal{P}_1$, $minsup$, $\mathcal{P}$)
\STATE \textbf{Input:} $P$ \COMMENT{a frequent pattern}, $\mathcal{P}_1$ \COMMENT{the set of frequent patterns of size 1},
$minsup$ \COMMENT{the  minimum support threshold}
\STATE \textbf{Input and Output:} $\mathcal{P}$ \COMMENT{the set of frequent patterns}
\STATE \textbf{Side effect:} $\mathcal{P}$ is updated with frequent rightmost path extensions of $P$
\FORALL{$P_1$ in $\mathcal{P}_1$}
\STATE \COMMENT{Let $d$ be the length of the rightmost path of $P$}
\FOR{$i=0$ \textbf{to} $d$}
\STATE \COMMENT{Let $u$ be the vertex of $P$ such that $rdep_{P}(u)=i$}
\STATE $P_n \leftarrow RExtend(P, P_1, u)$
\IF{$i=d$}
\STATE \textbf{occ-list}$(P_n)\leftarrow leaf\_join(P,P_1)$
\ELSE
\STATE \textbf{occ-list}$(P_n) \leftarrow inner\_join(P,P_1)$
\ENDIF
\IF{$sup(P_n,\mathcal D) \geq minsup$}
\STATE Add $P_n$ to $\mathcal{P}$
\STATE \textsf{Extend}($P_n$,$\mathcal{P}_1$, $minsup$, $\mathcal{P}$)
\ENDIF
\ENDFOR
\ENDFOR
\end{algorithmic}
\end{algorithm}

TPMiner follows a depth-first strategy for generating tree patterns.
First, it extracts frequent patterns of size $1$ (frequent vertex labels) and constructs their \textbf{occ-list}s.
This step can be done by one scan of the database.
Every \textbf{occ} of a pattern of size $1$ represents one occurrence of the pattern,
where its $RP$ contains the upper bound of the scope of the occurrence,
its $scope$ contains the scope of the occurrence, and its $multiplicity$ is $1$.
Then, larger patterns are generated using rightmost path extension.
For every tree $P$ of size $k+1$ ($k\geq 1$) which is generated by adding a vertex $v$ to a vertex on the rightmost path of a tree $P'$,
the algorithm computes the \textbf{occ-list} of $P$ by joining the \textbf{occ-list}s of $P'$ and $v$.
If $v$ is added to the the rightmost vertex of $P'$, a \textit{leaf join} is performed;
otherwise, an \textit{inner join} is done.
The high level pseudo code of TPMiner is given in Algorithm~\ref{algorithm:TPMiner}.
$\mathcal P$ is used to store all frequent patterns.
Every tree pattern is generated in $O(1)$ time, hence, time complexity of Algorithm~\ref{algorithm:TPMiner}
is $O( d \times b \times n \times |\mathcal{D}| \times \mathcal C)$,
where $\mathcal C$ is the number of generated candidates.


\subsection{An optimization technique for candidate generation}

In rightmost path extension, a new tree $P$ is generated by adding a new vertex to some vertex on the rightmost path of an existing frequent pattern $P'$, therefore,
it is already known that $P$ has (at least) one frequent subtree.
In an improved method proposed in \cite{36}, to generate the tree $P$, two patterns $P1$ and $P2$ such that their subtrees induced by all but the rightmost vertices
are the same, are merged. In this merge, the rightmost vertex of $P2$ (which is not in $P1$) is added to $P1$ as the new vertex.
In this way, we already know that $P$ has (at least) two subtrees that are frequent, therefore, trees that have only one frequent subtree are not generated.
This can reduce the number of trees that are generated but are infrequent.

\section{Experimental Results}
\label{sec:experimentalresults}

We performed extensive experiments to evaluate the efficiency of the proposed algorithm,
using data from real applications as well as synthetic datasets.
The experiments were done on one core of a single AMD
Processor $270$ clocked at $2.0$ GHz
with $16$ GB main memory and $2 \times 1$ MB L2 cache,
running Ubuntu Linux $12.0$.
The program was compiled by the GNU C++ compiler $4.0.2$.

VTreeMiner (also called TreeMiner) \cite{36} is a well-known algorithm
for finding all frequent embedded patterns from trees.
Therefore, we select this algorithm for our comparisons.
Recently more efficient algorithms, such as TreeMinerD \cite{36}, XSpanner \cite{26}, TRIPS and TIDES \cite{24},
have been proposed for finding frequent embedded tree patterns.
However, they only compute the per-tree frequency instead of the per-occurrence frequency.
Some other algorithms, such as \cite{30} and \cite{17}, find maximal embedded patterns
which are a small subset of all frequent tree patterns.
To the best of our knowledge at the time of writing this paper,
MB3Miner \cite{23} is the most efficient recent algorithm for finding frequent embedded tree patterns.
MB3Miner works with both per-tree frequency and per-occurrence frequency. 
Here, we use the version that works with the per-occurrence frequency.
We note MB3Miner generates only the frequent patterns such that all subtrees are frequent,
therefore, it might produce fewer frequent patterns than VTreeMiner and TPMiner.
Tatikonda et al. \cite{Tatikonda2} proposed efficient techniques for parallel mining of trees on multicore systems.
Since we do not aim at parallel tree mining, their system is not proper for our comparisons.

\begin{table}
\caption{\label{table:datase} Summary of real-world datasets.}
\begin{center}
\begin{tabular}{ l | l l l l l}
\hline
\multirow{2}{*}{Dataset}       & \multirow{2}{*}{\# Transactions} & \multirow{2}{*}{\# Vertices}  & \multirow{2}{*}{\# Vertex labels}  & \multicolumn{2}{c}{Transaction size} \\
\cline{5-6}
                               &                                  &                               &                                    &   Maximum   &  Average    \\ \hline
CSLOGS32241  & $32,241$        & $240,716$   &  $10,698 $  &  $435 $  &  $13.9323 $ \\
Prions       & $ 17,551 $      & $227,203 $ & $111 $ & $37 $  &  $24.9497$ \\
NASA         & $1,000 $        &  $163,753 $ &  $333 $ &  $471 $ & $326.506$\\
\hline
\end{tabular}
\end{center}
\end{table}

We used several real-world datasets from different areas to evaluate the efficiency of TPMiner.
The datasets do neither have noise nor missing values.
Table \ref{table:datase} reports basic statistics of our real-world datasets.

The first real-world dataset is CSLOGS \cite{36}
that contains the web access trees of the CS department of the Rensselaer
Polytechnic Institute.
It has $59,691$ trees, $716,263$ vertices and $13,209$ vertex labels.
Each distinct label corresponds to the URLs of a web page.
As discussed in \cite{23}, when \textit{per-occurrence} frequency is used,
none of the algorithms can find meaningful patterns,
because by decreasing $minsup$, suddenly lots of frequent patterns with many occurrences appear in the dataset
that makes the mining task practically intractable.
The authors of \cite{23} progressively reduced the dataset and
generated the CSLOGS32241 dataset that contains $32,241$ trees.
Figure~\ref{figure:CSLOGS32241} compares the algorithms over CSLOGS32241.
For all given values of $minsup$, VTreeMiner does not terminate within a reasonable time (i.e., $1$ day!),
therefore, the figure does not contain it.
TPMiner is faster than MB3Miner by a factor of $5$-$20$ and it requires significantly less memory cells.
In order to have a comparison with VTreeMiner, we tested the algorithms for $minsup=1000$;
while TPMiner finds frequent patterns within around $1.3$ seconds, 
VTreeMiner takes more than $1000$ seconds to find the same patterns.

\begin{figure*}
\centering
\subfigure[The horizontal axis shows the minimum support; the left vertical axis
the running time (sec) and the right vertical axis the number of patterns generated by TPMiner (and VTreeMiner).]
{
\includegraphics[scale=0.35]{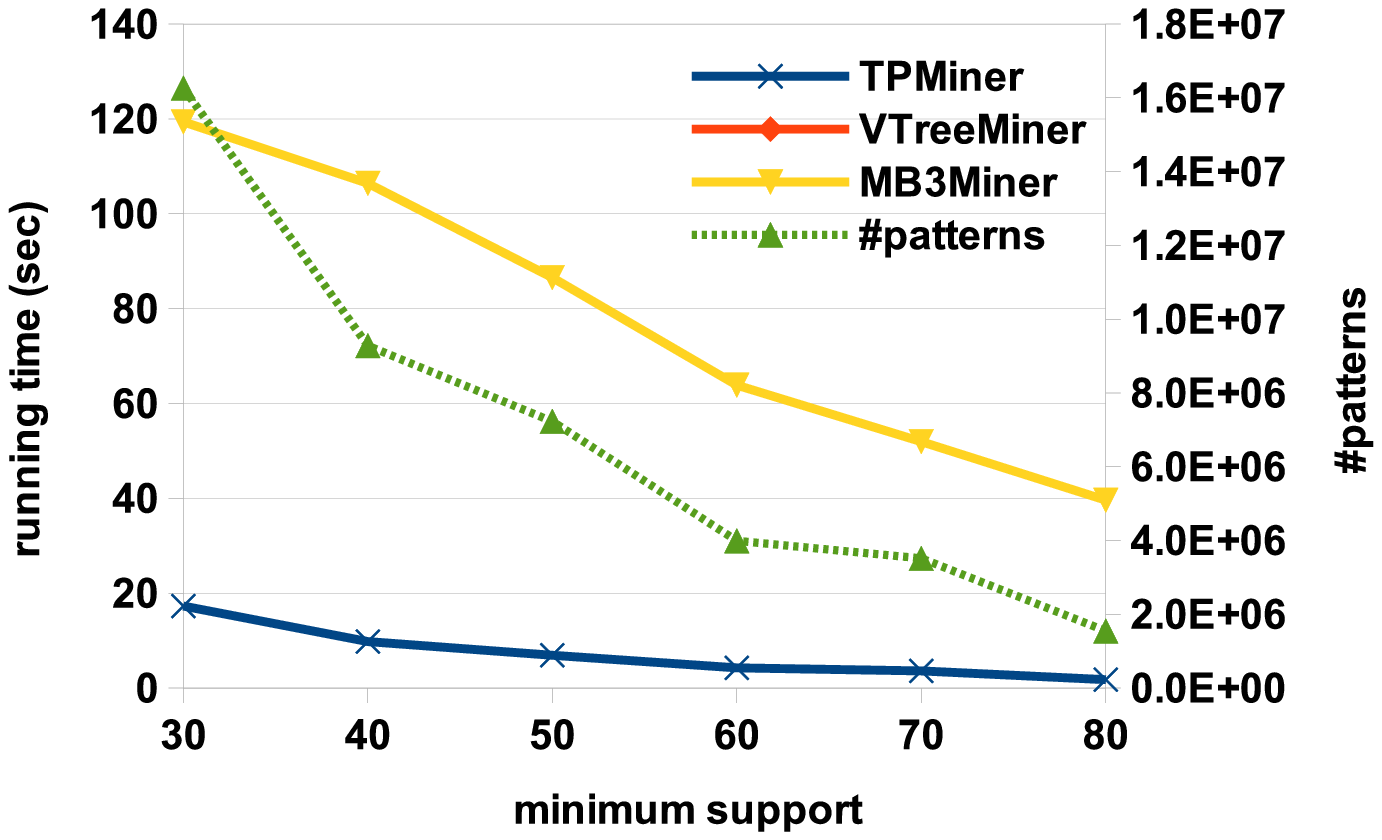}
\label{fig:CSLOGS32241_time}
}
\quad
\subfigure[The horizontal axis shows the minimum support; the left vertical axis
the memory usage (Byte) and the right vertical axis the number of patterns. The vertical axes are in logarithmic scale.]
{
\includegraphics[scale=0.35]{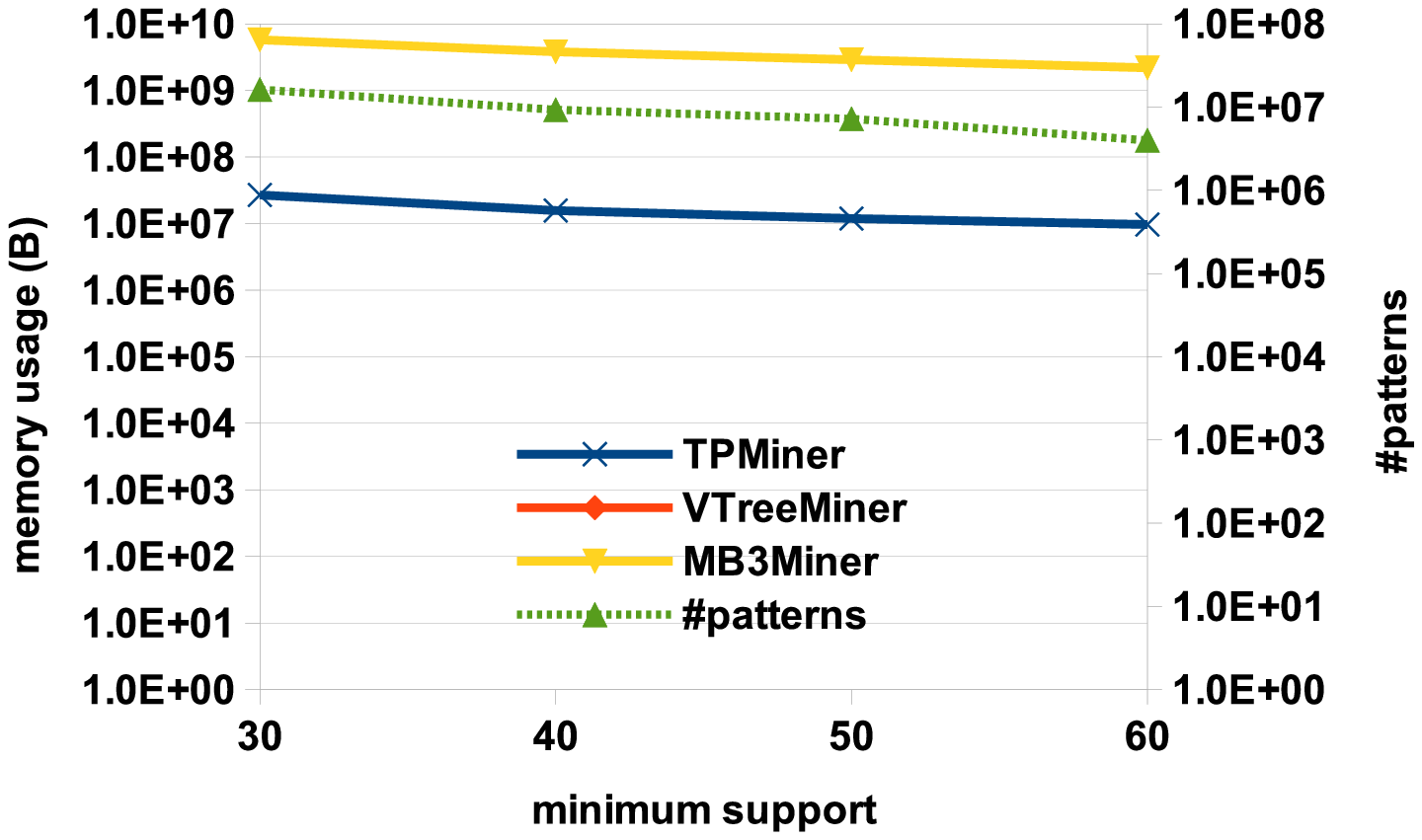}
\label{fig:CSLOGS32241_mem}
}
\caption
{
\label{figure:CSLOGS32241}
Comparison over CSLOGS32241.
}
\end{figure*}

\begin{figure*}
\centering
\subfigure[Minimum support vs. running time.]
{
\includegraphics[scale=0.35]{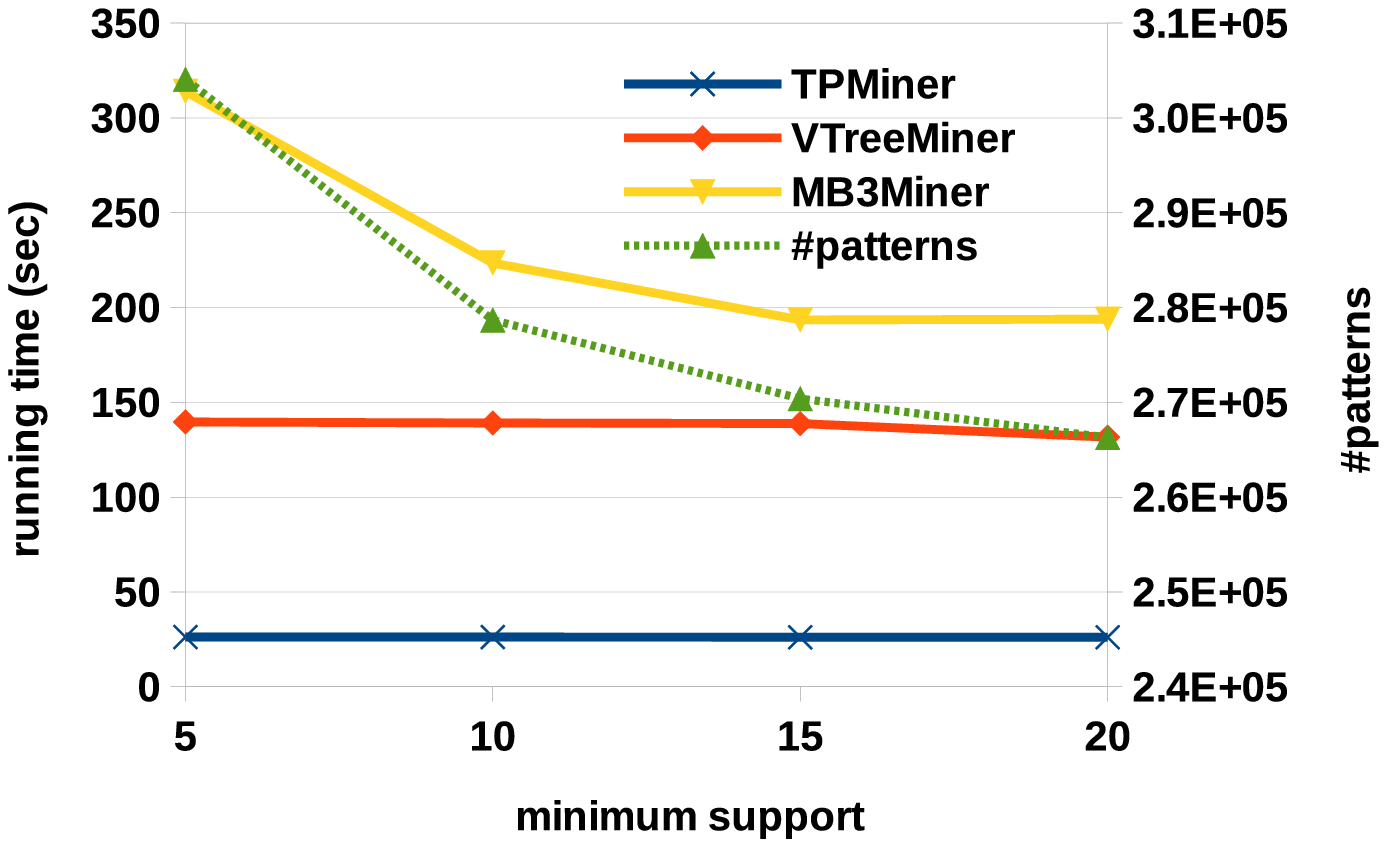}
\label{fig:prions_time}
}
\subfigure[Minimum support vs. memory usage. The vertical axes are in logarithmic scale.]
{
\includegraphics[scale=0.35]{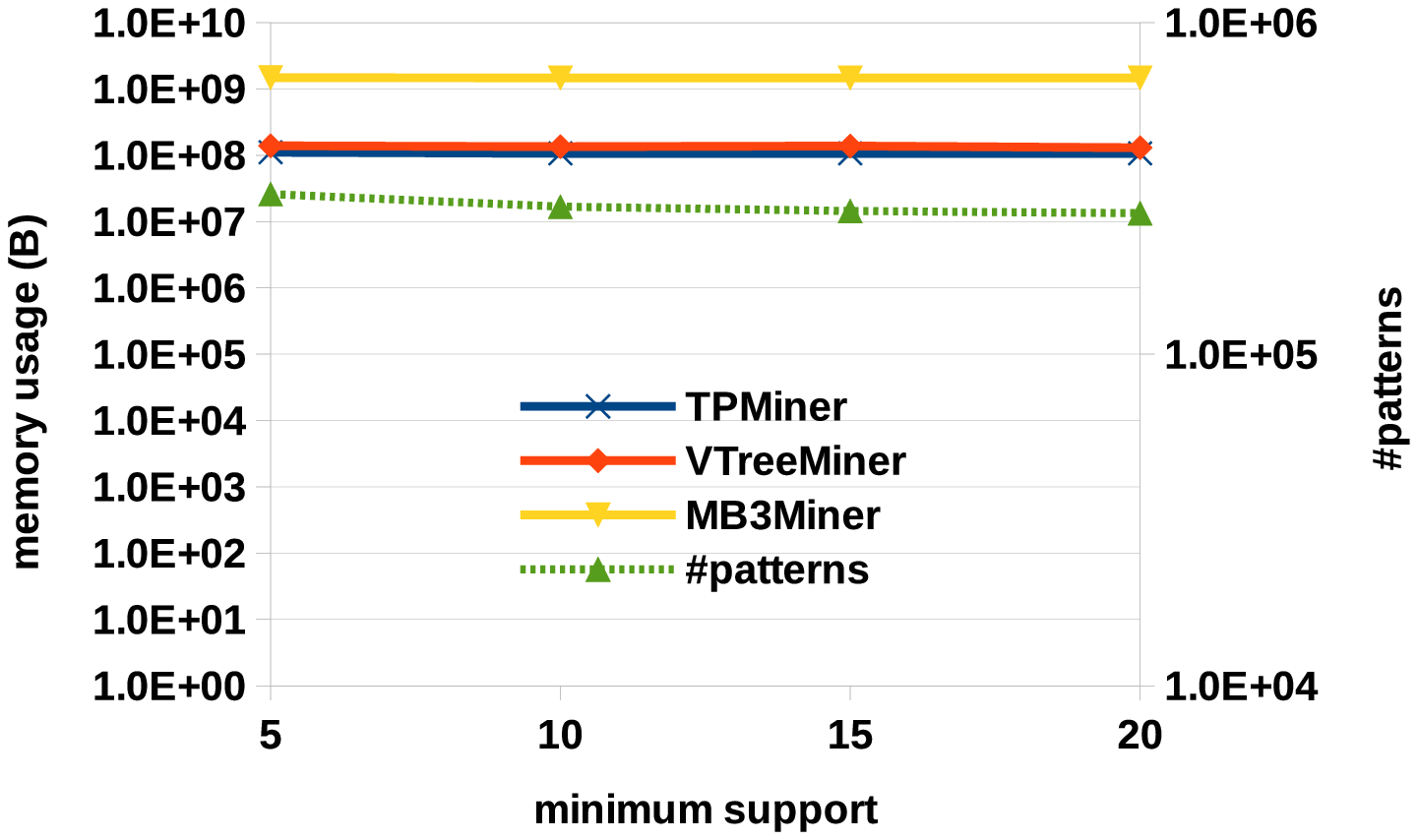}
\label{fig:prions_mem}
}
\caption
{
\label{figure:prions}
Comparison over Prions.
}
\end{figure*}

The second real-world dataset used in this paper is Prions that
describes a protein ontology database for Human Prion proteins in XML format \cite{Sidhu}.
The authors of \cite{23} converted it into a tree-structured dataset by considering tags as vertex labels.
It has $17,551$ wide trees.
Figure~\ref{figure:prions} reports the empirical results over this dataset,
where TPMiner is faster than VTreeMiner by a factor of $5$-$5.2$,
and it is faster than MB3Miner by a factor of $7.3$-$11$.
The third real-world dataset is a dataset of IP multicast.
The NASA dataset consists of MBONE multicast data that was measured during the NASA shuttle launch between the $14$th
and $21$st of February, $1999$ \cite{Chalmers1} and \cite{Chalmers2}.
It has $333$ distinct vertex labels where each vertex label is an IP address.
The data was sampled from this NASA dataset with $10$ minutes sampling interval and has $1,000$ trees.
In this dataset, large frequent patterns are found at high minimum support values.
As depicted in Figure~\ref{figure:nasa}, over this dataset,
TPMiner is $3$-$4$ times faster than VTreeMiner and both methods are significantly faster than MB3Miner.
At $minsup=902$, MB3Miner fails.

\begin{figure*}
\centering
\subfigure[Minimum support vs. running time.]
{
\includegraphics[scale=0.35]{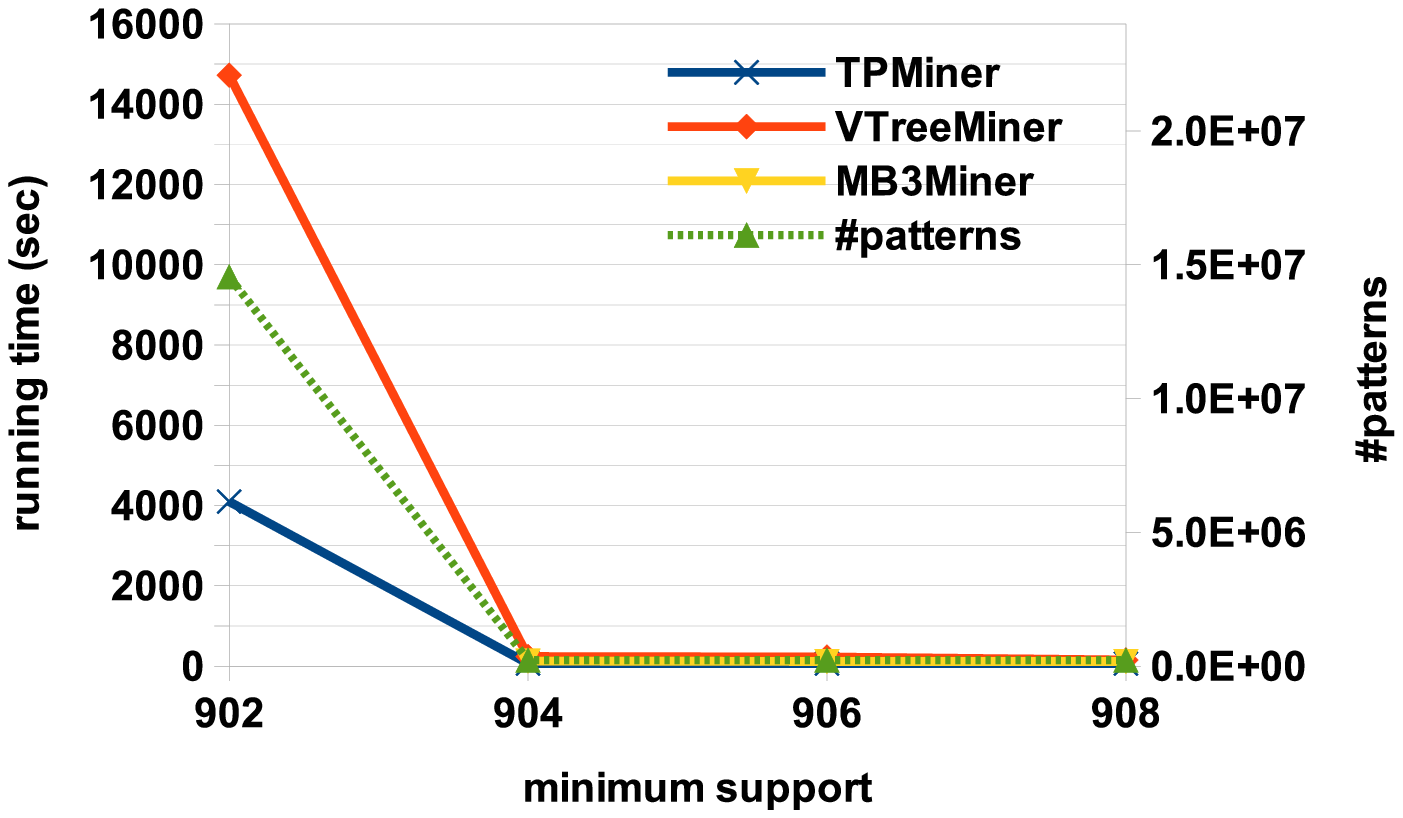}
\label{fig:nasa_time}
}
\subfigure[Minimum support vs. memory usage. The vertical axes are in logarithmic scale.]
{
\includegraphics[scale=0.35]{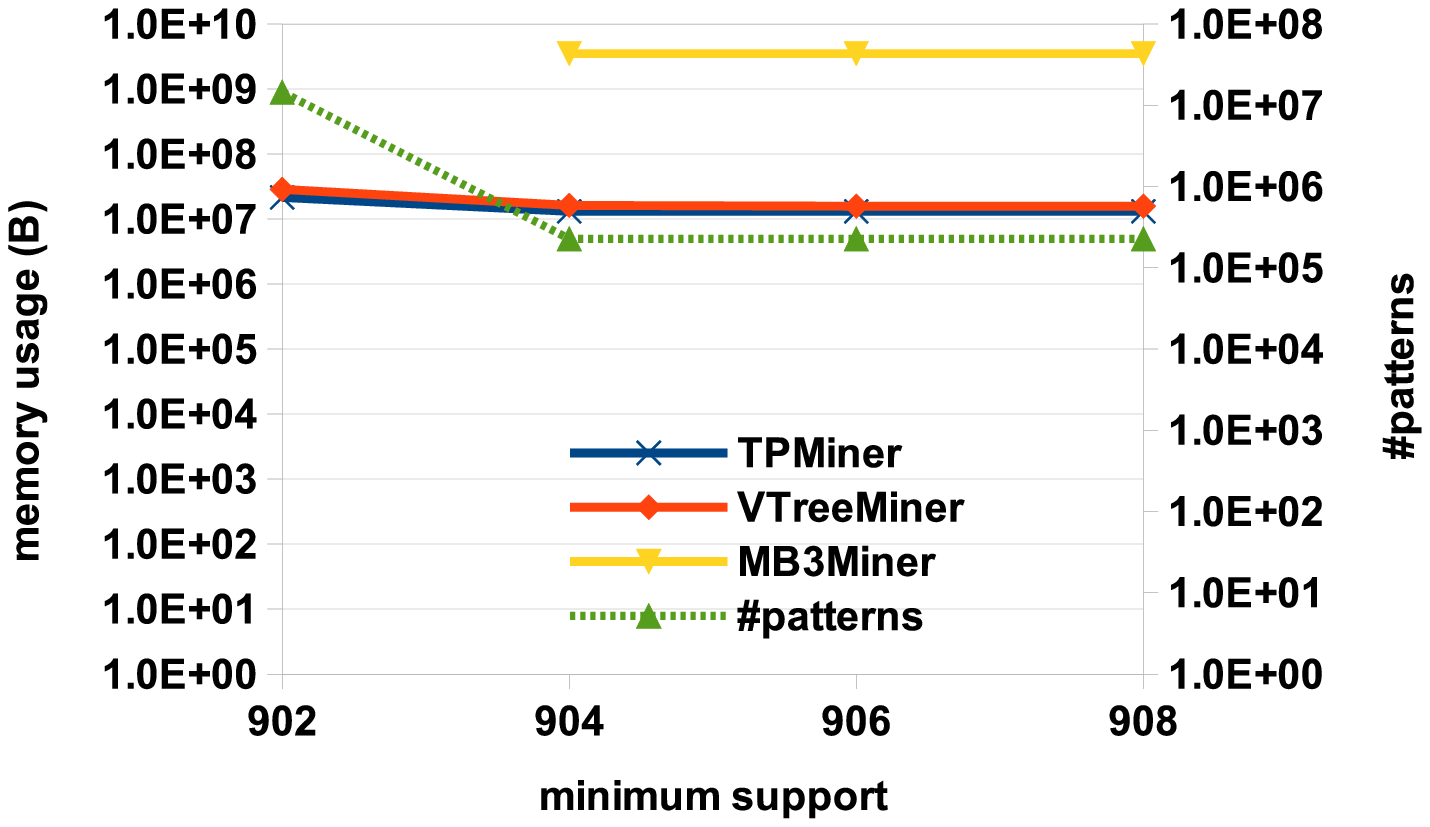}
\label{fig:nasa_mem}
}
\caption
{
\label{figure:nasa}
Comparison over NASA.
}
\end{figure*}

\begin{figure*}
\centering
\subfigure[D10: minimum support vs. running time.]
{
\includegraphics[scale=0.35]{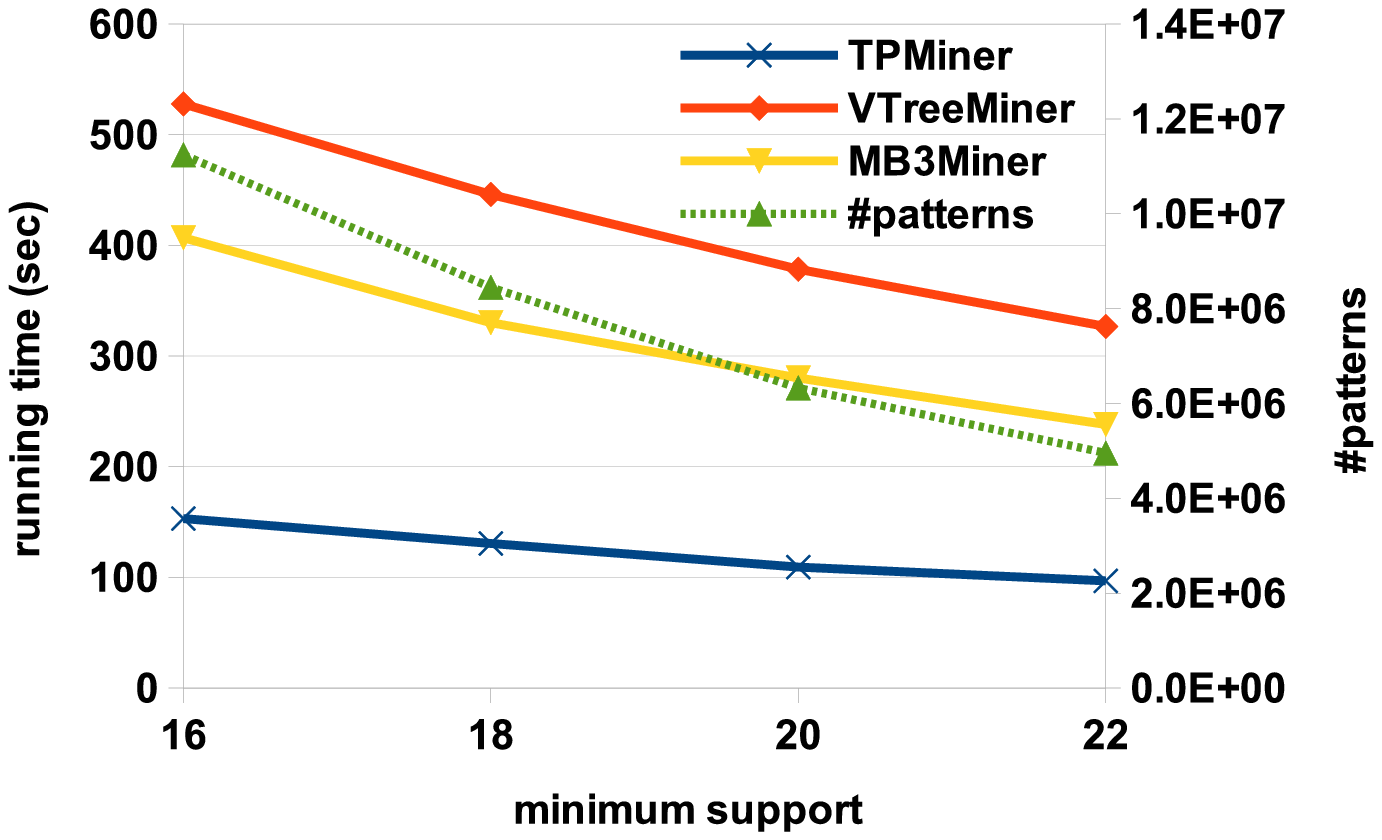}
\label{fig:d10_time}
}
\subfigure[D10: minimum support vs. memory usage. The vertical axes are in logarithmic scale.]
{
\includegraphics[scale=0.35]{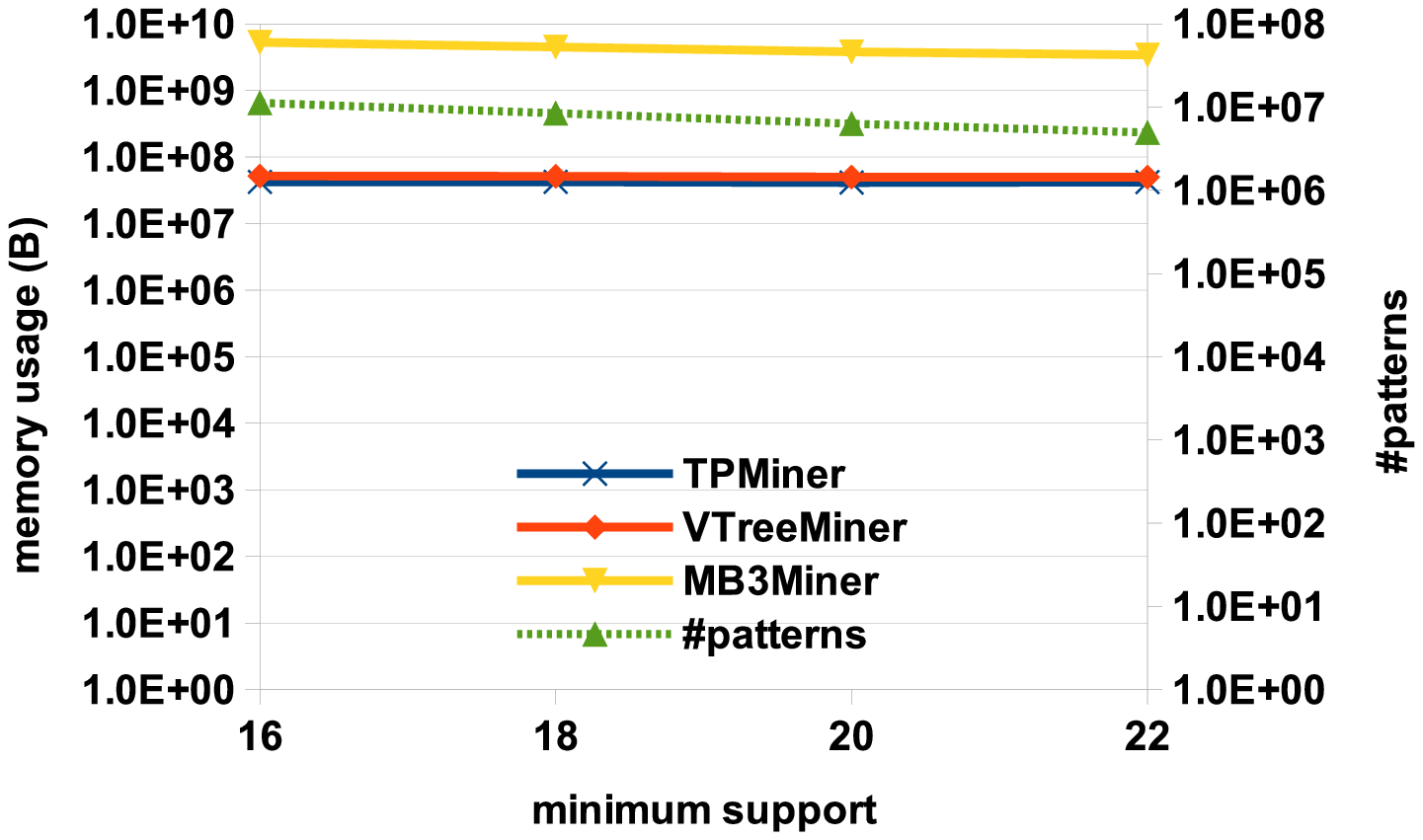}
\label{fig:d10_mem}
}
\subfigure[D5: minimum support vs. running time.]
{
\includegraphics[scale=0.35]{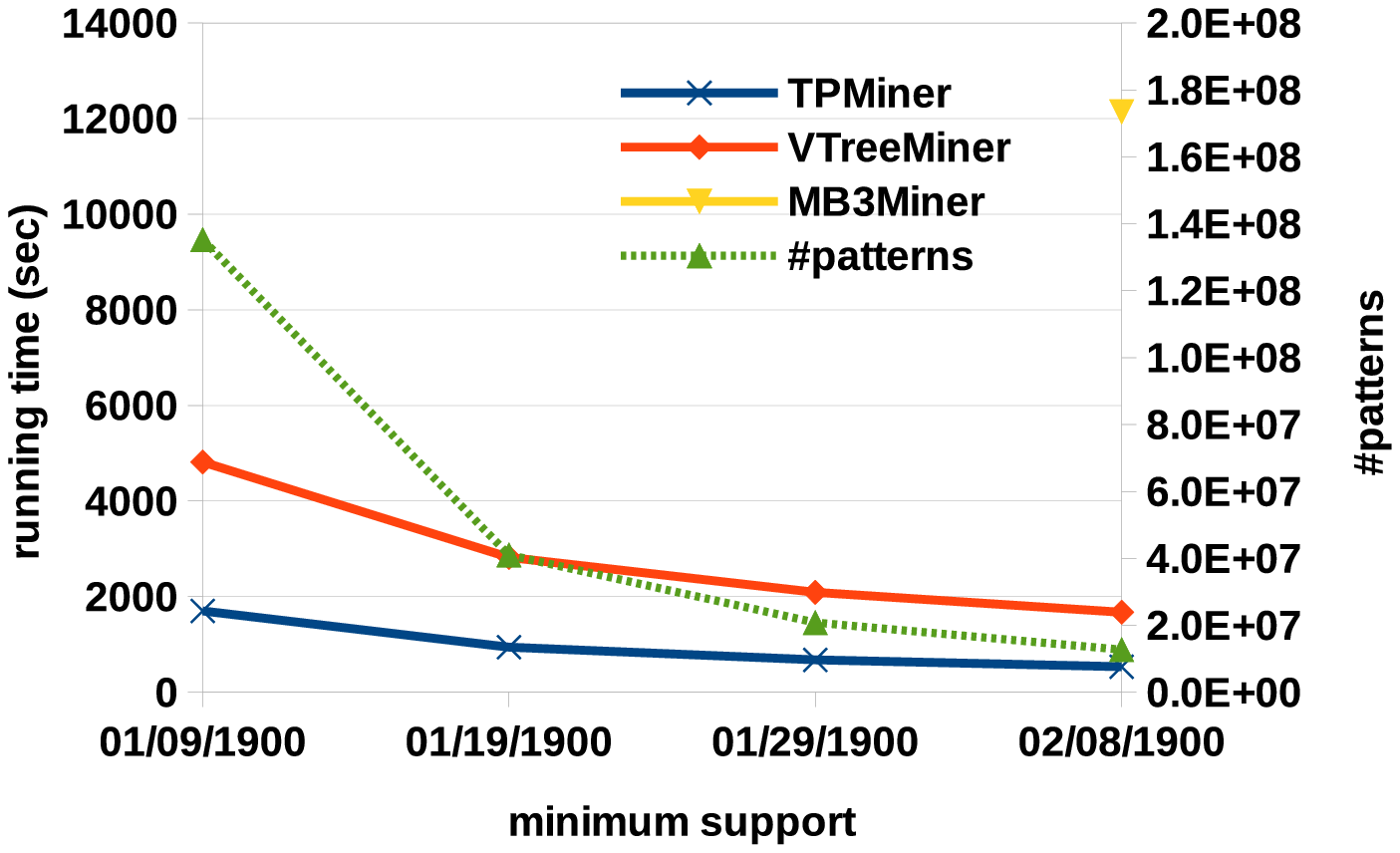}
\label{fig:d5_time}
}
\subfigure[D5: minimum support vs. memory usage. The vertical axes are in logarithmic scale.]
{
\includegraphics[scale=0.35]{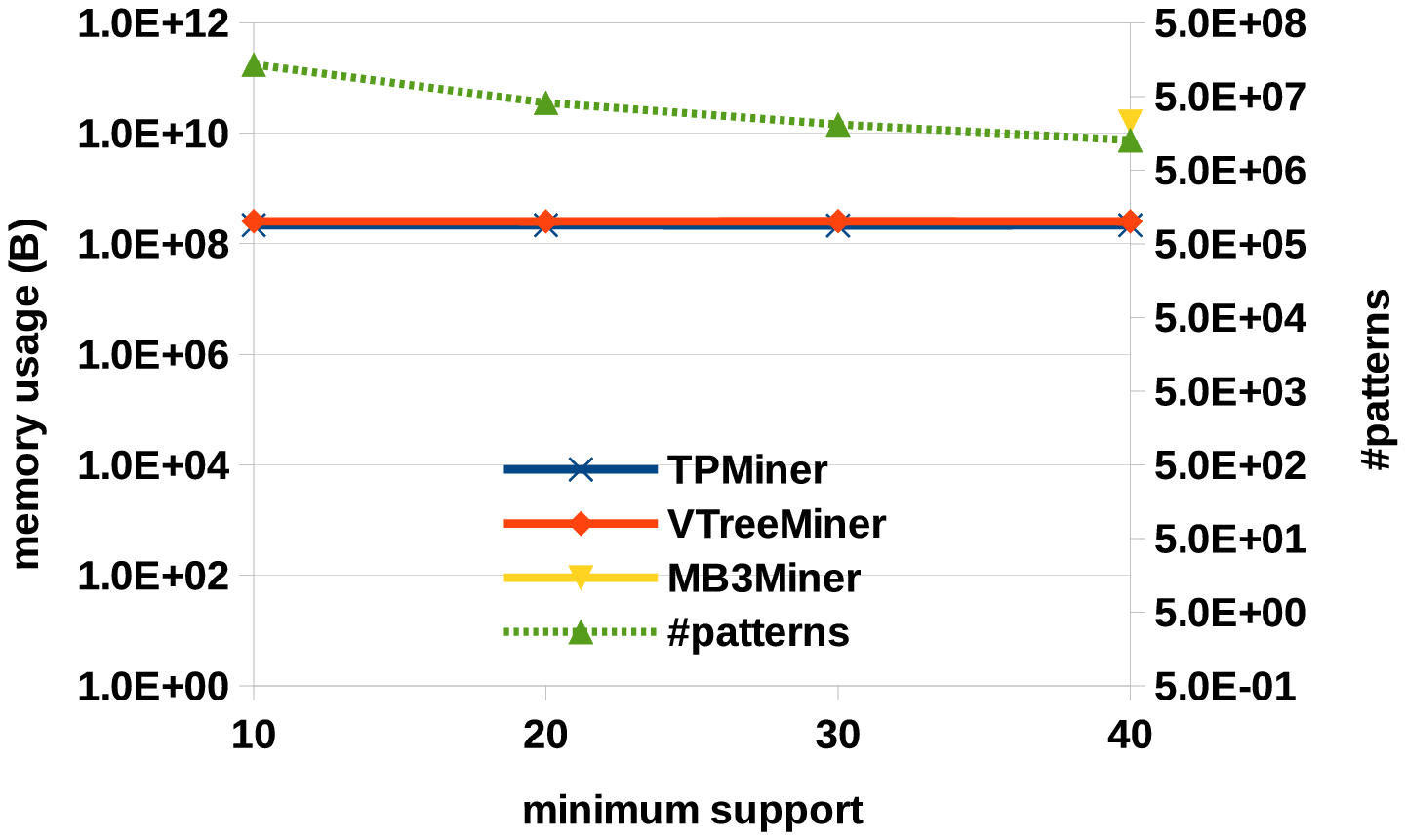}
\label{fig:d5_mem}
}
\subfigure[M10K: minimum support vs. running time.]
{
\includegraphics[scale=0.35]{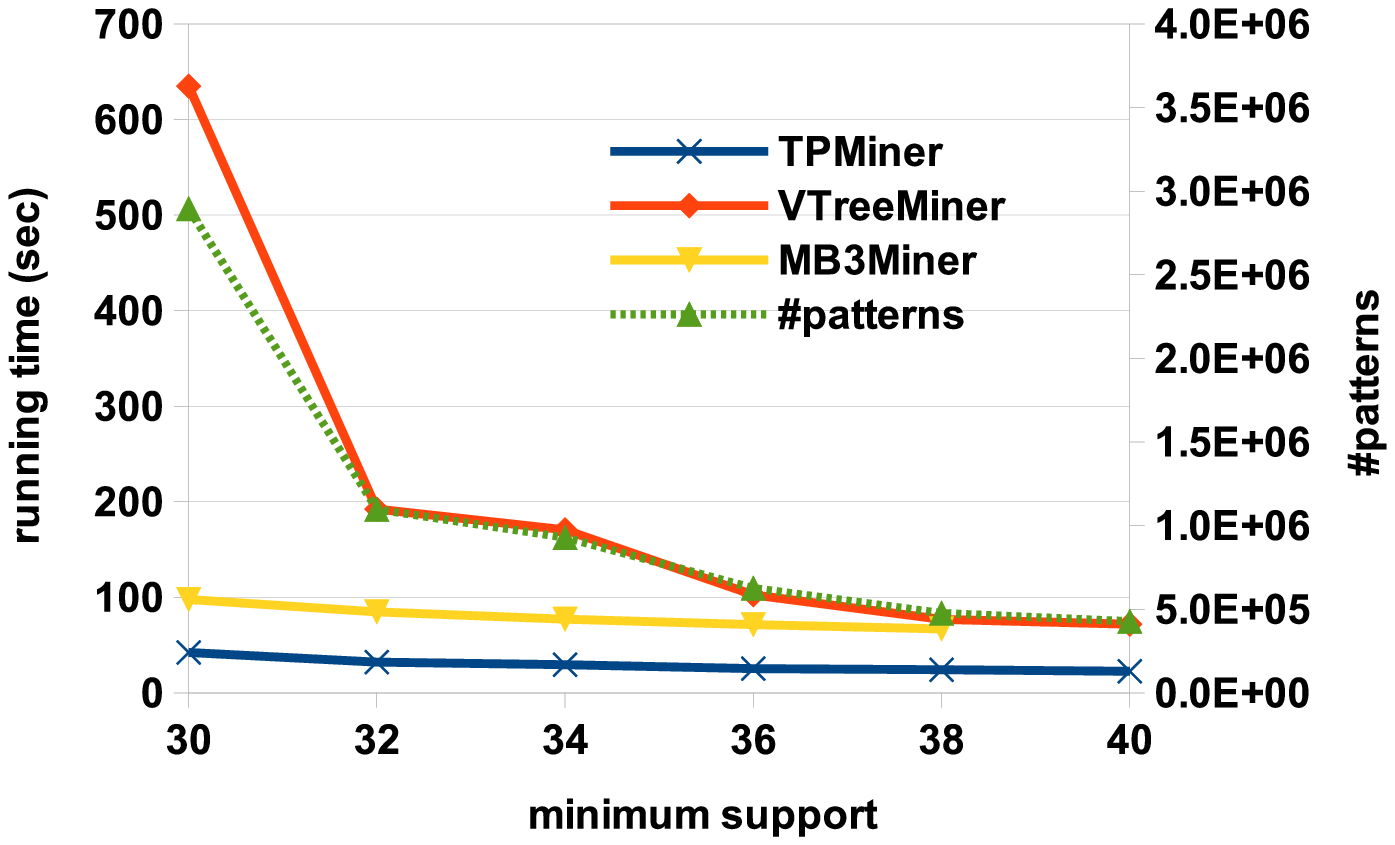}
\label{fig:m100000_time}
}
\subfigure[M10K: minimum support vs. memory usage. The vertical axes are in logarithmic scale.]
{
\includegraphics[scale=0.35]{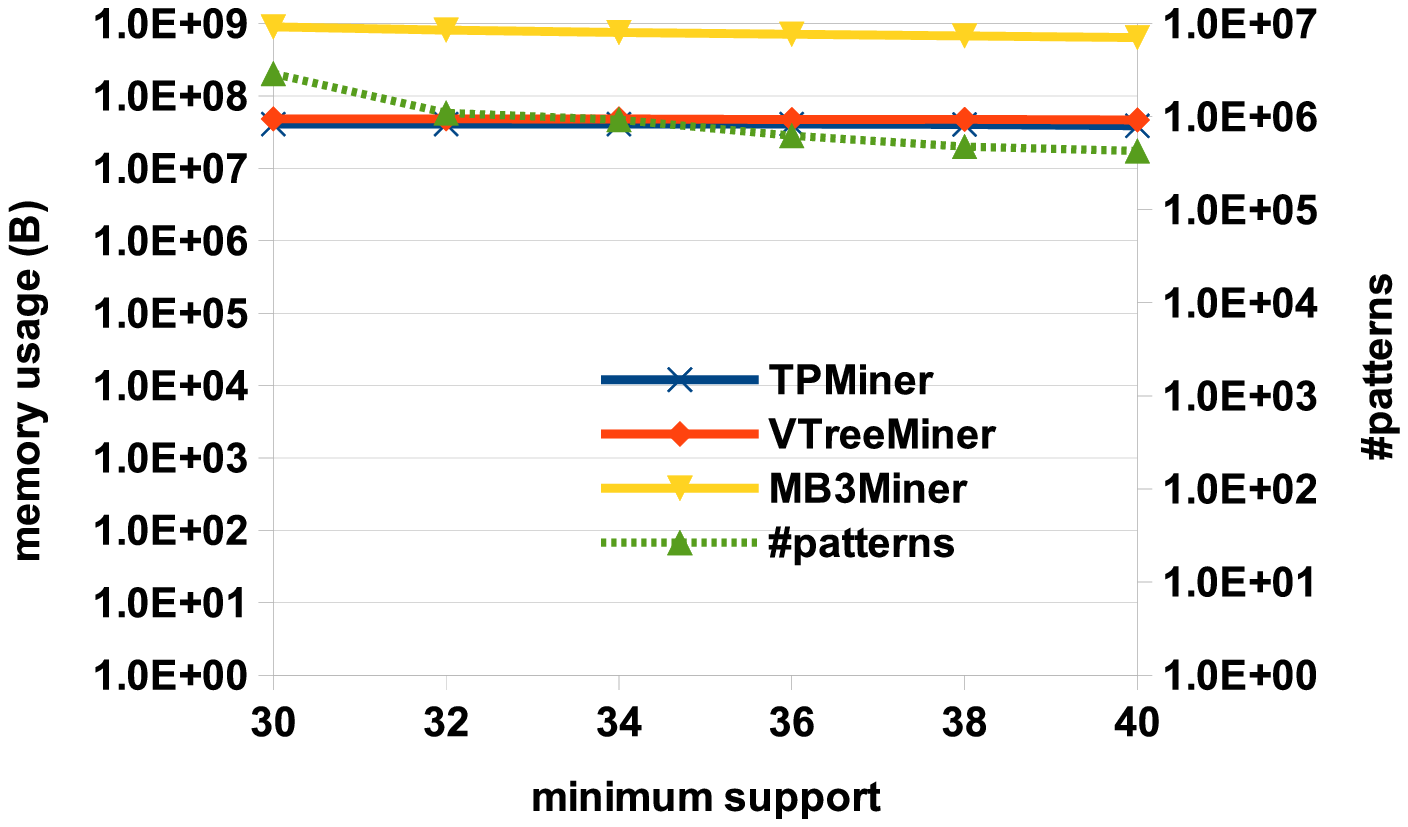}
\label{fig:m100000_time}
}
\caption
{
\label{fig:synthetic}
Comparison over synthetic datasets.
}
\end{figure*}

We also evaluated the efficiency of the proposed algorithm
on several synthetic datasets generated by the method described in \cite{36}.
The synthetic data generation program
mimics the web site browsing behavior of the user.
First a master web site browsing tree is built and then the
subtrees of the master tree are generated.
The program is adjusted by $5$ parameters:
($i$) the number of labels ($N$),
($ii$) the number of vertices in the master tree ($M$),
($iii$) the maximum fan-out of a vertex in the master tree ($F$),
($iv$) the maximum depth of the master tree ($D$), and
($v$) the total number of trees in the dataset ($T$).
Figure~\ref{fig:synthetic} compares the algorithms over the synthetic datasets.
The first synthetic dataset is D10 and uses the following
default values for the parameters: $N = 100$, $M=10,000$, $D=10$, $F=10$ and $T=100,000$.
Over this dataset, TPMiner is around $3$ times faster than VTreeMiner and VTreeMiner is slightly faster than MB3Miner.
The next synthetic dataset is D5, where $D$ is set to $5$ and for the other parameters, the default values are used.
Over this dataset, at $minsup=10$, $20$ and $30$, MB3Miner is aborted due to lack of memory.
We also evaluated the effect of $M$.
We set $M$ to $100,000$ and used the default values for the other parameters and generated the M100k dataset.
Over this dataset, TPMiner is faster than MB3Miner by a factor of $2$-$3$, and both TPMiner and MB3Miner are significantly faster than VTreeMiner.

\paragraph{Discussion.}
Our extensive experiments report that TPMiner always outperforms well-known existing algorithms.
Furthermore, there are several cases where TPMiner by order of magnitude is more efficient than any specific given algorithm. 
TPMiner and VTreeMiner require significantly less memory cells than MB3Miner.
This is due to the different large data-structures used by
MB3Miner such as the so-called EL, OC and VOL data structures and also
to the breadth-first search (BFS) strategy followed by MB3Miner \cite{23}.
Although TPMiner uses a more compact representation of
occurrences than VTreeMiner, this is hardly noticeable in the charts.
The reason is that the memory use is dominated by the storage of the frequent patterns.

In our experiments, we can distinguish two cases.
First, over datasets such as NASA and D5 (in particular for low values of $minsup$),
the Tree Model Guided technique used by MB3Miner does not significantly reduce the state space,
therefore, the algorithm fails or it does not terminate within a reasonable time. 
In such cases, TPMiner find all frequent patterns very effectively.
Second, over very large datasets (such as CSLOGS32241) or dense datasets (such as M100K)
where patterns have many occurrences, TPMiner becomes faster than VTreeMiner by order of magnitude.
This is due to the ability of TPMiner in frequency counting of patterns with many occurrences.
As discussed earlier, the \textbf{occ} data-structure used by TPMiner
can often represent and handle exponentially many occurrences with a single \textbf{occ} element,
while in VTreeMiner these occurrences are represented and handled one by one.

\section{Conclusion}
\label{section:conclusion}

In this paper, we proposed an efficient algorithm for subtree homeomorphism with application to frequent pattern mining.
We developed a compact data-structure, called \textbf{occ},
that effectively represents/encodes several occurrences of a tree pattern.
We then defined efficient join operations on \textbf{occ} that help us to
count occurrences of tree patterns according to occurrences of their proper subtrees.
Based on the proposed subtree homeomorphism method, we introduced TPMiner, an effective algorithm for finding
frequent tree patterns.
We evaluated the efficiency of TPMiner on several real-world and synthetic datasets.
Our extensive experiments show that TPMiner always outperforms well-known existing algorithms,
and there are several situations where the improvement compared to existing algorithms is significant.

\begin{acknowledgements}
We are grateful to Professor Mohammed Javeed Zaki for providing the VTreeMiner code,
the CSLOGS datasets and the TreeGenerator program,
to Dr Henry Tan for providing the MB3Miner code,
to Dr Fedja Hadzic for providing the Prions dataset
and to Professor Jun-Hong Cui for providing the NASA dataset.
Finally, we would like to thank Dr Morteza Haghir Chehreghani for his discussion and suggestions.
\end{acknowledgements}

\bibliographystyle{spmpsci}      
\bibliography{template}   

\begin{thebibliography}{10}
\providecommand{\url}[1]{{#1}}
\providecommand{\urlprefix}{URL }
\expandafter\ifx\csname urlstyle\endcsname\relax
  \providecommand{\doi}[1]{DOI~\discretionary{}{}{}#1}\else
  \providecommand{\doi}{DOI~\discretionary{}{}{}\begingroup
  \urlstyle{rm}\Url}\fi

\bibitem{2}
Asai, T., Abe, K., Kawasoe, S., Arimura, H., Satamoto, H., Arikawa, S.:
  Efficient substructure discovery from large semi-structured data.
\newblock In: Proceedings of the Second SIAM International Conference on Data
  Mining (SDM), pp. 158--174. SIAM (2002)

\bibitem{Balcazar}
Balcazar, J.L., Bifet, A., Lozano, A.: Mining frequent closed rooted trees.
\newblock Machine Learning \textbf{78}(1-2), 1--33 (2010)

\bibitem{Bille}
Bille, P., Gortz, I.: The tree inclusion problem: in linear space and faster.
\newblock ACM Transactions on Algorithms \textbf{7}(3), 1--47 (2011)

\bibitem{Chalmers1}
Chalmers, R., Almeroth, K.: Modeling the branching characteristics and
  efficiency gains of global multicast trees.
\newblock In: Proceedings of the 20th IEEE International Conference on Computer
  Communications (INFOCOM), pp. 449--458 (2001)

\bibitem{Chalmers2}
Chalmers, R.C., Member, S., Almeroth, K.C.: On the topology of multicast trees.
\newblock IEEE/ACM Transactions on Networking \textbf{11}, 153--165 (2003)

\bibitem{Chaoji}
Chaoji, V., Hasan, M.A., Salem, S., Zaki, M.J.: An integrated, generic approach
  to pattern mining: data mining template library.
\newblock Data Mining and Knowledge Discovery \textbf{17}(3), 457--495 (2008)

\bibitem{Chehreghani2}
Chehreghani, M.H.: Efficiently mining unordered trees.
\newblock In: Proceedings of the 11th IEEE International Conference on Data
  Mining (ICDM), pp. 111--120 (2011)

\bibitem{Chehreghani1}
Chehreghani, M.H., Chehreghani, M.H., Lucas, C., Rahgozar, M.: {OI}nduced: an
  efficient algorithm for mining induced patterns from rooted ordered trees.
\newblock IEEE Transactions on Systems, Man, and Cybernetics, Part A
  \textbf{41}(5), 1013--1025 (2011)

\bibitem{Chisurvey}
Chi, Y., Muntz, R.R., Nijssen, S., Kok, J.N.: Frequent subtree mining - an
  overview.
\newblock Fundamenta Informaticae \textbf{66}(1-2), 161--198 (2005)

\bibitem{4}
Chi, Y., Yang, Y., Muntz, R.R.: Indexing and mining free trees.
\newblock In: Proceedings of the Third IEEE International Conference on Data
  Mining (ICDM), pp. 509--512 (2003)

\bibitem{8}
Cui, J., Kim, J., Maggiorini, D., Boussetta, K., Gerla, M.: Aggregated
  multicast - a comparative study.
\newblock In: Proceedings of the Second International IFIP-TC6 Networking
  Conference on Networking Technologies, Services, and Protocols; Performance
  of Computer and Communication Networks; and Mobile and Wireless
  Communications (NETWORKING), pp. 1032--1044 (2002)

\bibitem{Diestel}
Diestel, R.: Graph Theory, 4th ed.
\newblock Springer-Verlag, Heidelberg (2010)

\bibitem{Dietz}
Dietz, P.F.: Maintaining order in a linked list.
\newblock In: Proceedings of the 14th ACM Symposium on Theory of Computing
  (STOC), pp. 122--127 (1982)

\bibitem{Ivancsy}
Ivancsy, R., Vajk, I.: Frequent pattern mining in web log data.
\newblock Acta Polytechnica Hungarica \textbf{3}(1), 77--90 (2006)

\bibitem{12}
Kilpelainen, P., Mannila, H.: Ordered and unordered tree inclusion.
\newblock SIAM Journal of Computing \textbf{24}(2), 340--356 (1995)

\bibitem{17}
Miyahara, T., Suzuki, Y., Shoudai, T., Uchida, T., Takahashi, K., Ueda, H.:
  Discovery of maximally frequent tag tree patterns with contractible variables
  from semistructured documents.
\newblock In: Proceedings of the 8th Pacific Asia Conference on Knowledge
  Discovery and Data Mining (PAKDD), pp. 133--144 (2004)

\bibitem{18}
Nijssen, S., Kok, J.N.: Efficient discovery of frequent unordered trees.
\newblock In: Proceedings of the First International Workshop on Mining Graphs,
  Trees, and Sequences (MGTS), pp. 55--64 (2003)

\bibitem{Qin}
Qin, L., Yu, J.X., Ding, B.: {T}wig{L}ist: make twig pattern matching fast.
\newblock In: Proceedings of the 12th International Conference on Database
  Systems for Advanced Applications (DASFAA), pp. 850--862 (2007)

\bibitem{Sidhu}
Sidhu, A.S., Dillon, T.S., Chang, E.: Protein ontology.
\newblock In: Z.~Ma, J.Y. Chen (eds.) Database Modeling in Biology: Practices
  and Challenges, pp. 39--60. Springer-Verlag (2006)

\bibitem{23}
Tan, H., Hadzic, F., Dillon, T.S., Chang, E., Feng, L.: Tree model guided
  candidate generation for mining frequent subtrees from {XML} documents.
\newblock ACM Transaction on Knowledge Discovery from Data (TKDD)
  \textbf{2}(2), 43 pages (2008).
\newblock \doi{10.1145/1376815.1376818}

\bibitem{Tatikonda2}
Tatikonda, S., Parthasarathy, S.: Mining tree-structured data on multicore
  systems.
\newblock The Proceedings of the VLDB Endowment (PVLDB) \textbf{2}(1), 694--705
  (2009)

\bibitem{24}
Tatikonda, S., Parthasarathy, S., Kurc, T.M.: {TRIPS} and {TIDES}: new
  algorithms for tree mining.
\newblock In: Proceedings of the 15th ACM International Conference on
  Information and Knowledge Management (CIKM), pp. 455--464 (2006)

\bibitem{26}
Wang, C., Hong, M., Pei, J., Zhou, H., Wang, W., Shi, B.: Efficient
  pattern-growth methods for frequent tree pattern mining.
\newblock In: Proceedings of the 8th Pacific Asia Conference on Knowledge
  Discovery and Data Mining (PAKDD), pp. 441--451 (2004)

\bibitem{30}
Xiao, Y., Yao, J.F., Li, Z., Dunham, M.H.: Efficient data mining for maximal
  frequent subtrees.
\newblock In: Proceedings of the Third IEEE International Conference on Data
  Mining (ICDM), pp. 379--386 (2003)

\bibitem{35}
Zaki, M.J.: Efficiently mining frequent embedded unordered trees.
\newblock Fundamenta Informaticae \textbf{66}(1-2), 33--52 (2005)

\bibitem{36}
Zaki, M.J.: Efficiently mining frequent trees in a forest: algorithms and
  applications.
\newblock IEEE Transaction on Knowledge and Data Engineering \textbf{17}(8),
  1021--1035 (2005)

\bibitem{37}
Zaki, M.J., Aggarwal, C.C.: {XR}ules: an effective algorithm for structural
  classification of {XML} data.
\newblock Machine Learning \textbf{62}(1-2), 137--170 (2006)

\end{thebibliography}

\end{document}